\def\stoc{0}
\def\notes{0}
\theoremstyle{plain}
\newtheorem*{theorem*}{Theorem}
\newtheorem*{lem*}{Lemma}
\theoremstyle{plain}
\newtheorem{thm}{Theorem}[section]
  \theoremstyle{definition}
  \newtheorem{defn}[thm]{Definition}
  \theoremstyle{plain}
  \theoremstyle{plain}
  \newtheorem{conjecture}[thm]{Conjecture}
  \theoremstyle{plain}
  \newtheorem{lem}[thm]{Lemma}
  \theoremstyle{remark}
  \newtheorem{rem}[thm]{Remark}
    \theoremstyle{plain}
    \theoremstyle{plain}
  \newtheorem{claim}[thm]{Claim}
    \theoremstyle{plain}
  \newtheorem{proposition}[thm]{Proposition}
\newcommand{\F}{{\mathbb{F}}}
\newcommand{\E}{{\mathbb{E}}}
\newcommand{\R}{{\mathbb{R}}}
\newcommand{\inp}[1]{\langle{#1}\rangle}
\newcommand{\spn}[1]{{\rm{span}}\left(#1\right)}
\newcommand{\set}[1]{\left\{{#1}\right\}}
\newcommand{\eps}{\epsilon}
\newcommand{\poly}{{\rm{poly}}}
\newcommand{\spec}{{\rm{Spec}}}
\newcommand{\cc}{{\rm{CC}}}
\newcommand{\rnk}{{\rm{rank}}}
\newcommand{\rep}{{\rm{rep}}}
\newcommand{\qpoly}{{\rm{q.poly}}}
\newcommand{\enote}[1]{\ifnum\notes=1{{\sf\color{green} [Eli: #1]}}\fi}
\newcommand{\nnote}[1]{\ifnum\notes=1{{\sf\color{red} [Noga: #1]}}\fi}
\newcommand{\snote}[1]{\ifnum\notes=1{{\sf\color{blue} [Shachar: #1]}}\fi}
\newcommand{\restatethm}[3]{\medskip \noindent {\bf #1} (#2). \textsl{#3}\\}
\begin{document}

\ifnum\stoc=1
\begin{titlepage}
\fi

\title{
An additive combinatorics approach to the log-rank conjecture in communication complexity}
\iffalse{On the log-rank conjecture in communication complexity and approximate duality}\fi
\author{Eli Ben-Sasson\thanks{Department of Computer Science, Technion, Haifa, Israel and Microsoft Research New-England, Cambridge, MA.
 {\tt eli@cs.technion.ac.il}. The research leading to these results has received funding from the European Community's Seventh Framework
Programme (FP7/2007-2013) under grant agreement number 240258.} \and Shachar Lovett\thanks{School of Mathematics, Institute for Advanced Study, Princeton, NJ. {\tt slovett@math.ias.edu}. Supported by NSF grant DMS-0835373.}
 \and Noga Zewi \thanks{Department of Computer Science, Technion, Haifa. nogaz@cs.technion.ac.il}}
\maketitle
\enote{Added ``in communication complexity'' to title}

\begin{abstract}
\enote{Changed title+2nd paragraph of abstract, objections?}
  For a $\{0,1\}$-valued matrix $M$ let $\cc(M)$ denote the deterministic communication complexity of the boolean function associated with $M$.
  The log-rank conjecture of Lov\'{a}sz and Saks [FOCS 1988]
  states that $\cc(M)\leq \log^c(\rnk(M))$ for some absolute constant $c$ where $\rnk(M)$
  denotes the rank of $M$ over the field of real numbers.
\iffalse
  Assuming the approximate-duality conjecture of
  --- a conjecture in additive combinatorics that is known to be implied by the polynomial Frieman-Ruzsa conjecture --- we show that $\cc(M)\leq c \cdot \rnk(M)/\log \rnk(M)$ for some absolute constant $c$.
\fi
 We show that $\cc(M)\leq c \cdot \rnk(M)/\log \rnk(M)$ for some absolute constant $c$, assuming a well-known conjecture from additive combinatorics known as the Polynomial Freiman-Ruzsa (PFR) conjecture.

 Our proof is based on the study of the ``approximate duality conjecture'' which was recently suggested by Ben-Sasson and Zewi [STOC 2011] and studied there in connection to the PFR conjecture. First we improve the bounds on approximate duality assuming the PFR conjecture. Then we use the approximate duality conjecture (with improved bounds) to get the aforementioned upper bound on the communication complexity of low-rank martices, where this part uses the methodology suggested by Nisan and Wigderson [Combinatorica 1995].
\iffalse
 and its relation to the afore-mentioned conjectures. which was initiated by Ben-Sasson and Zewi (STOC 2011), and its relation to the log-rank conjecture.
% \nnote{Any suggestions for replacing this sentence?}
\fi
\end{abstract}

\ifnum\stoc=1
\end{titlepage}
%\nonumber
\setcounter{page}{1}
\fi

\section{Introduction}

This paper presents a new connection between communication complexity and additive combinatorics, showing that a well-known conjecture from additive combinatorics known as the {\em Polynomial Freiman-Ruzsa Conjecture} (PFR, in short), implies better upper bounds than currently known on the deterministic communication complexity of a boolean function in terms of the rank of its associated matrix.
More precisely, our results show that the PFR Conjecture implies that every boolean function has communication
 complexity $O( \rnk(M)/\log \rnk(M))$ where $\rnk(M)$ is the rank, over the reals, of the associated matrix. This is a weakening of a well-known conjecture in communication complexity, known as the {\em log-rank conjecture}, which postulates that every boolean function has communication complexity $\log^{O(1)}\rnk(M)$.

Our analysis relies on the study of {\em approximate duality}, a concept closely related to the PFR Conjecture, which was introduced in \cite{BZ11}. Our main technical contribution is improved bounds on approximate duality, assuming the PFR Conjecture. We then use these bounds in order to prove the new upper bounds on communication complexity.

\subsection{Communication complexity and the log-rank conjecture}

In the two-party communication complexity model two parties --- Alice and Bob --- wish to compute a function $f: X \times Y \to \{0,1\}$ on inputs $x$ and $y$ where $x$ is known only to Alice and $y$ is known only to Bob. In order to compute the function $f$ they must exchange bits of information between each other according to some (deterministic) protocol. The (deterministic) communication complexity of a protocol is the maximum total number of bits sent between the two parties, where the maximum is taken over all pairs of inputs $x,y$. We henceforth omit the adjective ``deterministic'' from our discourse because our results deal only with the deterministic model. The communication complexity of the function $f$, denoted by $\cc(f)$, is the minimum communication complexity of a protocol for $f$.

For many applications it is convenient to associate the function $f: X \times Y \to \{0,1\}$ with the matrix
 $M\in \{0,1\}^{X \times Y}$ whose
 $(x,y)$ entry equals $f(x,y)$. For a $\{0,1\}$-valued matrix $M$, let $\cc(M)$ denote the communication complexity of the boolean function associated with $M$.
 Let $\rnk(M)$ denote the rank of $M$ over the reals. We will occasionally consider the rank of $M$ over the two-element field $\F_2$ and will denote this by $\rnk_{\F_2}(M)$.

It is well-known that $\log \rnk(M) \leq \cc(M) \leq \rnk(M) $ (cf. \cite[Chapter 1.4]{KN97}) and it is a fundamental question to find out what is the true dependency of $\cc(M)$ on the rank. The log-rank conjecture due to\ Lov\'{a}sz and Saks \cite{LS88} postulates that the gap between the above lower bound and upper bound on $\cc(M)$ is polynomial rather than exponential:

\begin{conjecture}[Log-rank]\label{conj:log-rank}
For every $\{0,1\}$-valued matrix M \[\cc(M) = \log^{O(1)} \rnk(M).\]
\end{conjecture}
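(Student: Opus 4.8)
\textbf{Approach to Conjecture~\ref{conj:log-rank}.} Of course this is a long-standing open problem, so what follows is not a complete argument but the route that, in my view, is most promising — and the one this paper is ultimately organized around. The plan has two halves: (1) reduce the conjecture, via the Nisan--Wigderson methodology, to a purely combinatorial-geometric statement about low-rank $\{0,1\}$-matrices; and (2) attack that statement with additive combinatorics, using approximate duality and the PFR conjecture as the engine.

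\emph{Step 1 (the reduction).} It suffices to prove that every $\{0,1\}$-valued matrix $M$ of rank $r$ contains a monochromatic combinatorial rectangle $A\times B$ (i.e.\ $M$ restricted to $A\times B$ is constant) whose relative size is at least $2^{-\mathrm{polylog}(r)}$. Granting this, one builds a recursive protocol: Alice and Bob agree in advance on such a rectangle $A\times B$; Alice sends one bit for ``$x\in A$?'' and Bob one bit for ``$y\in B$?''. If both answers are ``yes'' the value $f(x,y)$ is known; otherwise the input has been localized to one of the submatrices $\overline A\times Y$, $A\times\overline B$, or $\overline A\times\overline B$, each of rank at most $r$ but containing a noticeably smaller fraction of the $1$-entries (or $0$-entries) of $M$. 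A potential-function argument on this fraction shows the recursion halts after $\mathrm{polylog}(r)$ rounds, giving $\cc(M)=\mathrm{polylog}(\rnk(M))$. Since $\cc(M)\le\rnk(M)$ already gives something, all the weight is on finding the large monochromatic rectangle.

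\emph{Step 2 (finding the rectangle, via additive combinatorics).} Write the rows of $M$ as vectors $a_x\in\R^r$ and the columns as $b_y\in\R^r$, so $\langle a_x,b_y\rangle\in\{0,1\}$ and the $1$-entries are exactly $\{(x,y):\langle a_x,b_y\rangle=1\}$. A $0$-monochromatic rectangle is then a pair of row-set and column-set whose associated vector families are mutually orthogonal. This places us squarely in the regime of \emph{approximate duality}: given a set $A$ of rows and a set $B$ of columns with $\langle a_x,b_y\rangle=0$ for \emph{most} pairs, extract large subsets $A'\subseteq A$, $B'\subseteq B$ that are \emph{perfectly} orthogonal (hence span a $0$-rectangle). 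The strong quantitative bounds on approximate duality that this paper derives from the PFR conjecture furnish exactly such an extraction with $|A'|\ge 2^{-\mathrm{polylog}}|A|$ and similarly for $B'$; composing this with the protocol of Step~1 handles the ``inner-product-like'' behaviour that governs low-rank Boolean matrices.

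\emph{The main obstacle} is Step~2, and it is why the conjecture stays open even \emph{under} PFR. First, an arbitrary rank-$r$ $\{0,1\}$-matrix need not globally resemble an inner-product matrix: one must first pass to a large subrectangle where one value is far from being $1/2$-dense so that the bilinear structure becomes exploitable, then iterate, and keeping the rank loss across these passes under control is delicate. Second, and more fundamentally, the best approximate-duality bounds available — even assuming PFR — degrade like $\exp(-\Omega(\sqrt r))$ rather than $\exp(-\mathrm{polylog}(r))$; this is precisely why the present methodology yields only $\cc(M)=O(\rnk(M)/\log\rnk(M))$ and not the polylogarithmic bound. Closing the gap — pushing approximate duality all the way into the polylogarithmic regime, or finding a genuinely different handle on the large-monochromatic-rectangle problem for low-rank Boolean matrices — is the crux on which a full proof of Conjecture~\ref{conj:log-rank} hinges.
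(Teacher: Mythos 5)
This statement is the log-rank conjecture itself, which the paper does not prove --- it is stated as an open conjecture (Conjecture~\ref{conj:log-rank}), and the paper only establishes the weaker bound $\cc(M)=O(\rnk(M)/\log\rnk(M))$ under PFR. Your proposal correctly recognizes this and your outlined route is essentially the paper's own methodology: the Nisan--Wigderson reduction to finding large monochromatic rectangles in low-rank matrices, followed by passing to a biased submatrix and extracting a monochromatic rectangle via approximate duality with PFR-based bounds. One quantitative point to fix in your ``main obstacle'' discussion: the loss in the paper's approximate-duality lemma is $2^{-cn/\log n}$ (which is what produces the $O(r/\log r)$ communication bound), while $2^{-O(\sqrt n)}$ is the \emph{intrinsic barrier} shown by the paper's remark (the Hamming-slice example with $D(A,B)\geq 0.99$); so even an optimal version of this extraction step could not do better than roughly $O(\sqrt{r})$ communication by this route, and in particular cannot reach the polylogarithmic regime --- which is consistent with your conclusion that the conjecture remains out of reach of this method, under PFR or otherwise.
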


Lov\'{a}sz and Saks also point out that the above conjecture has several other interesting equivalent formulations. One of them, due to Nuffelen \cite{Nuf76} and Fajtlowicz \cite{Faj88}, is the following:

\begin{conjecture}
For every graph $G$, $\chi(\overline G) \leq \log^{O(1)}\rnk(G)$, where $\chi(\overline G)$ is the chromatic number of the complement of $G$, and $\rnk(G)$ is the rank of the adjacency matrix of $G$ over the reals.
\end{conjecture}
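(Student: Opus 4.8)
\medskip\noindent\textbf{Proof proposal.}
The assertion above is, as the cited work indicates, a reformulation of the log-rank conjecture (Conjecture~\ref{conj:log-rank}), so it cannot be established unconditionally; I therefore propose to prove the implication from Conjecture~\ref{conj:log-rank}, i.e.\ to show that if $\cc(N)\le\log^{O(1)}\rnk(N)$ for every $\zo$-valued matrix $N$, then $\log\chi(\overline G)\le\log^{O(1)}\rnk(A_{\overline G})$ for every graph $G$, where $\rnk$ is the rank over $\R$. The single non-immediate ingredient is a combinatorial observation: the communication matrix of the adjacency function of a graph $H$ has a zero diagonal, which forces every vertex of $H$ into a monochromatic $0$-rectangle of any protocol, and the ``diagonal part'' of that rectangle is an independent set of $H$.

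Concretely, I would argue as follows. Put $H:=\overline G$ with adjacency matrix $A_H\in\zo^{V\times V}$, $V:=V(G)$, and fix an optimal deterministic protocol for the function $(u,v)\mapsto A_H[u,v]$; its at most $2^{\cc(A_H)}$ leaves partition $V\times V$ into monochromatic combinatorial rectangles. For each $v\in V$ the entry $(v,v)$ equals $A_H[v,v]=0$, so it lies in an all-zero rectangle $R_v=X_v\times Y_v$ with $v\in X_v\cap Y_v$; set $S_v:=X_v\cap Y_v$. For distinct $u,w\in S_v$ one has $u\in X_v$ and $w\in Y_v$, hence $(u,w)\in R_v$ and $A_H[u,w]=0$, so $uw\notin E(H)$; thus $S_v$ is an independent set of $H=\overline G$, i.e.\ a clique of $G$, and $v\in S_v$. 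The at most $2^{\cc(A_H)}$ distinct sets $S_v$ therefore cover $V(G)$ by cliques of $G$, and since a clique cover may be thinned to a clique partition of no larger size, $\chi(\overline G)\le 2^{\cc(A_{\overline G})}$. Feeding the log-rank hypothesis to the $\zo$-matrix $A_{\overline G}$ then gives $\log\chi(\overline G)\le\cc(A_{\overline G})\le\log^{O(1)}\rnk(A_{\overline G})$, as wanted.

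In this direction the proof is complete once the zero-diagonal trick is in hand, so there is no real obstacle to clear. The difficulty --- not needed here, but it is what makes the displayed statement genuinely \emph{equivalent} to Conjecture~\ref{conj:log-rank} rather than merely implied by it --- lies in the converse: turning a chromatic-number bound for graphs of small adjacency rank into a communication upper bound for an \emph{arbitrary} small-rank $\zo$-matrix. The obstacle there is that the obvious bipartite encoding of a matrix $M$ has adjacency rank $2\rnk(M)$ but only trivial cliques, so the chromatic bound is vacuous on it; one has to route through the Nisan--Wigderson reformulation (log-rank $\Leftrightarrow$ every low-rank $\zo$-matrix contains a monochromatic rectangle of relative density $2^{-\log^{O(1)}\rnk}$) together with the fact that conditioning on such a rectangle cuts the rank by a constant factor, so that $O(\log\rnk)$ rounds of recursion suffice. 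Carrying out that recursion, with the attendant rank-reduction estimate, is where I would expect the real work of a complete argument to lie.
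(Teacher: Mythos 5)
The paper offers no proof of this statement for you to be measured against: it appears only as a conjecture, namely the Lov\'{a}sz--Saks equivalent formulation (crediting Nuffelen and Fajtlowicz) of Conjecture~\ref{conj:log-rank}, and the equivalence is neither proved nor sketched anywhere in the text. So I can only judge your argument on its own terms. The direction you actually claim is proved correctly: the diagonal entries of $A_{\overline G}$ are $0$, each $(v,v)$ lies in an all-zero leaf rectangle $X_v\times Y_v$ of an optimal protocol, the sets $X_v\cap Y_v$ are independent in $\overline G$ and cover the vertex set, hence $\chi(\overline G)\le 2^{\cc(A_{\overline G})}$, and the log-rank hypothesis applied to $A_{\overline G}$ finishes. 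You are also right that the substance of the word ``equivalent'' lies in the converse, which you only sketch via the Nisan--Wigderson monochromatic-rectangle reformulation; since you do not claim that direction, that is not a gap in what you promised.

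What you should not do silently is change the statement, and you changed it in two places: you inserted a logarithm on the left, and you replaced $\rnk(G)$ by $\rnk(A_{\overline G})$. Both changes are in fact forced, because the inequality as printed is false: without the logarithm, take $G$ a perfect matching on $n$ vertices, where $\rnk(G)=n$ but $\chi(\overline G)=n/2$; with the logarithm but with $\rnk(G)$ kept on the right, take $G=K_{n/2,n/2}$, where $\rnk(G)=2$ while $\overline G$ is two disjoint cliques of size $n/2$, so $\log\chi(\overline G)=\log(n/2)$ is not $\log^{O(1)}2$. The intended statement is surely the same-graph form $\log\chi(G)\le\log^{O(1)}\rnk(G)$ (equivalently your $\log\chi(\overline G)\le\log^{O(1)}\rnk(\overline G)$), which is the standard Lov\'{a}sz--Saks formulation and is exactly what your rectangle argument can yield: your protocol is for $A_{\overline G}$, so only $\rnk(\overline G)$ can appear, and $\rnk(\overline G)$ is not controlled by $\rnk(G)$ (note $A_{\overline G}=J-I-A_G$, and the identity term alone can push the rank from $O(1)$ up to $n$). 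Your proof is fine for that corrected statement; state explicitly that this is the statement you prove and why the literal one cannot be what is meant, rather than papering over the substitution.
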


Though considerable effort has been made in an attempt to resolve this conjecture since its introduction 25 years ago, to this date not much is known regarding to it. The best currently known upper bound on the communication complexity of a $\{0,1\}$-valued matrix $M$ with respect to its rank is  $\cc(M) \leq \log (4/3) \cdot \rnk(M) = (0.415...) \cdot \rnk(M)$
due to Kotlov \cite{Kot97} and it improves on the previous best bound of $\cc(M) \leq \rnk(M)/2$ by Kotlov and Lov\'{a}sz \cite{KL96}. In the other direction, the best lower bound is $\cc(M) \geq \Omega(\log^{t} \rnk(M))$ for $t = \log_3 6=1.631 ...$ due to Kushilevitz (unpublished, cf. \cite{NW95}) which improved on  a previous bound with $t = \log_2 3=1.585...$ due to Nisan and Wigderson \cite{NW95}.

Our main result is an upper bound of  $O (\rnk(M)/\log \rnk(M))$ on the communication complexity of a $\{0,1\}$-valued matrix $M$ assuming a well-known conjecture from additive combinatorics --- the Polynomial Freiman-Ruzsa conjecture --- discussed next.

\subsection{Additive combinatorics and the Polynomial Freiman-Ruzsa conjecture}

Quoting the (current) Wikipedia definition, additive combinatorics studies ``combinatorial estimates associated with the arithmetic operations of addition and subtraction''. As such, it deals with a variety of problems that aim to 'quantify' the amount of additive structure in subsets of additive groups.
One such a problem is that which is addressed by the Polynomial Freiman-Ruzsa conjecture (we shall encounter a different problem in additive combinatorics when we get to ``approximate duality'' later on).

For $A \subseteq \F_2^n$, let $A+A$ denote the sum-set of $A$
\[A+A:=\{a+a' \mid a,a'\in A\}\]
where addition is over $\F_2$. It is easy to see that $|A+A|  = |A|$ if and only if $A$ is an affine subspace of $\F_2^n$. The question addressed by the Freiman-Ruzsa Theorem is whether the ratio of $|A+A|$ to $|A|$ also 'approximates' the closeness of $A$ to being a subspace, or in other words, whether the fact that $A+A$ is small with respect to the size of  $A$ also implies that $\spn A$ is small with respect to the size of $A$.
The Freimnan-Ruzsa Theorem \cite{Ruz99} says that this is indeed the case.

\begin{thm}[Freiman-Ruzsa Theorem \cite{Ruz99}]\label{thm:Freiman}
If $A \subseteq \F_2^n$ has $|A+A| \leq K|A|$, then $|\spn A| \leq K^2 2^{K^4}|A|$.
\end{thm}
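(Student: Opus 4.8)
The plan is to combine two workhorses of additive combinatorics --- the Pl\"{u}nnecke--Ruzsa sumset inequalities and Ruzsa's covering lemma --- with a short bootstrapping argument that upgrades an ``almost‑closure'' property of a sumset of $A$ into a bound on $|\spn{A}|$. By translating $A$ we may assume $0\in A$, so that, writing $kA:=A+\dots+A$ ($k$ summands), the sumsets are nested, $A\subseteq 2A\subseteq 3A\subseteq\cdots$, and $\spn{A}=\bigcup_{k\ge 1}kA$ (over $\F_2$, every element of $\spn{A}$ is a finite sum of elements of $A$).

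First I would collect the numerics. From $|A+A|\le K|A|$, the Pl\"{u}nnecke--Ruzsa inequality gives $|kA|\le K^{k}|A|$ for all $k\ge 1$; in particular $|2A|\le K|A|$, $|4A|\le K^{4}|A|$ and $|5A|\le K^{5}|A|$. Next I would invoke Ruzsa's covering lemma (if $|S+T|\le L\,|T|$ then $S\subseteq X+(T-T)$ for some $X\subseteq S$ with $|X|\le L$) with $S=4A$ and $T=A$: since $|4A+A|=|5A|\le K^{5}|A|=K^{5}|T|$ and $T-T=A-A=2A$ over $\F_2$, this produces a set $X$ with $0\in X$, $|X|\le K^{5}$, and $4A\subseteq X+2A$.

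The key move is then the bootstrap. Set $B:=2A$, so the inclusion above reads $B+B\subseteq X+B$. By induction on $n$ (using $0\in X$, $0\in B$) one gets $nB\subseteq (n-1)X+B$ for every $n\ge 1$: indeed $(n+1)B=(nB)+B\subseteq\bigl((n-1)X+B\bigr)+B=(n-1)X+(B+B)\subseteq(n-1)X+(X+B)=nX+B$. Unioning over $n$ gives $\spn{B}\subseteq\spn{X}+B$. Since $A\subseteq 2A=B\subseteq\spn{A}$ we have $\spn{A}=\spn{B}$, hence $\spn{A}\subseteq\spn{X}+2A$, and therefore
\[
|\spn{A}|\ \le\ |\spn{X}|\cdot|2A|\ \le\ 2^{|X|}\cdot K|A|\ \le\ K\cdot 2^{K^{5}}|A| .
\]
This already gives a Freiman--Ruzsa-type bound; the sharper constant $K^{2}2^{K^{4}}$ claimed in the theorem comes from running the same covering/bootstrapping scheme more economically --- covering the symmetric set $2A-2A$ at the right places and tracking the Pl\"{u}nnecke--Ruzsa exponents carefully --- as in Ruzsa's original argument.

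The step I expect to be the main obstacle is ``closing the loop'' in the bootstrap. A careless use of the covering lemma covers $A$ only by translates of a set of the form $T-T$ whose size is governed by a worse sumset bound, and that inclusion then fails to propagate up the tower $\{kA\}_{k\ge1}$ without inflating either the number of translates (which sits in the exponent) or the base set $B$. Covering by translates of $A$ itself, and taking $B=2A$ so that $B+B=4A$ is again covered by only $|X|$ translates of $B$, is precisely what makes the induction self-contained; optimizing these choices is what pins down the exact constant.
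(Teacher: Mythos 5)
The paper states the Freiman--Ruzsa Theorem as a known result cited from Ruzsa \cite{Ruz99} and never proves it --- it is background motivating the PFR conjecture --- so there is no in-paper proof to compare against. Your argument is the standard one (Pl\"unnecke--Ruzsa sumset bounds, Ruzsa covering, then a self-feeding inclusion $B+B\subseteq X+B$) and it is essentially correct as a proof of a Freiman--Ruzsa-type bound; the steps all check out, including $\spn A=\bigcup_k kA$ once $0\in A$, the induction $nB\subseteq (n-1)X+B$, and the final count $|\spn A|\le|\spn X|\cdot|2A|\le 2^{|X|}K|A|$.

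Two things keep you a little short of the stated constant $K^2 2^{K^4}$, and you are right that they are numerical rather than structural. First, you apply the covering lemma with $S=4A$, $T=A$, paying $|X|\le K^5$. You can instead take $S=3A$, $T=A$: then $|S+T|=|4A|\le K^4|A|$, so $|X|\le K^4$ and $3A\subseteq X+2A$. Reading this as $B+A\subseteq X+B$ with $B=2A$ and inducting $B+nA\subseteq nX+B$ (valid since $\spn A=\bigcup_n(B+nA)$ when $0\in A$) gives $|\spn A|\le 2^{K^4}\cdot K|A|$, which over $\F_2$ is in fact slightly better than $K^2 2^{K^4}$; the extra factor of $K$ in Ruzsa's statement comes from the general-group covering bound $|T-T|\le K^2|T|$, which over $\F_2$ collapses to $|A-A|=|A+A|\le K|A|$. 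Second, a small bookkeeping point: normalizing $0\in A$ by translating $A\mapsto A-a_0$ leaves $|A+A|$ unchanged but can enlarge $\spn A$ by a factor of $2$ when you translate back (since $\spn A\subseteq \spn{A-a_0}\cup (a_0+\spn{A-a_0})$), so you should either state the conclusion for a coset of a subspace containing $A$ or carry the harmless factor of $2$. Neither issue affects the validity of the approach, which matches the proof in Ruzsa's paper in spirit.
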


The above Theorem was improved in a series of works \cite{GR06,San08,GT09} and the best currently known upper bound
on the ratio $\frac {|\spn A|} {|A|}$ is $2^{(2+o(1))K}$
due to Green and Tao \cite{GT09}. This can be seen to be almost tight (up to the $o(1)$ term) by letting $A=\{u_1,u_2,\ldots,u_t\}$, where $u_1,u_2,\ldots,u_t \in \F_2^n$ are linearly independent vectors. Then in this case we have $|A+A| \approx \frac {t} 2 |A|$, while $|\spn A| = 2^t$.

The same example also shows that the ratio $\frac {|\spn A|} {|A|}$ must depend exponentially on $K$.
However, this example does not rule out the existence of a large subset $A' \subseteq A$
for which the ratio $\frac {|\spn{A'}|} {|A'|}$ is just polynomial in $K$, and this is exactly what is suggested by the PFR Conjecture:

\begin{conjecture}[Polynomial Freiman-Ruzsa (PFR)]\label{conj:PFR}
There exists an absolute constant $r$, such that if  $A\subset \F_2^n$ has $|A+A|\leq K|A|$, then there exists a subset
$A' \subseteq A$ of size at least $K^{-r}|A|$ such that $|\spn{A'}| \leq |A|$.
\end{conjecture}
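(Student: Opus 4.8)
Since the final statement is the Polynomial Freiman--Ruzsa conjecture itself, which is a central open problem in additive combinatorics rather than something I can settle here, what I can offer is the shape of the argument one would hope to push through, together with an honest assessment of where it breaks. The natural point of departure is the Freiman--Ruzsa Theorem (Theorem~\ref{thm:Freiman}): from $|A+A| \le K|A|$ it already produces a subspace $V \supseteq A$, namely $V = \spn{A}$, with $|V| \le K^{2}2^{K^{4}}|A|$. The whole difficulty lives in the term $2^{K^4}$: it makes $|V|/|A|$ exponential in $K$, and as the linearly-independent-vectors example shows, no bound on the span of all of $A$ can do better. The plan is therefore to trade a polynomial factor in the \emph{density} — pass to a subset $A' \subseteq A$ with $|A'| \ge K^{-O(1)}|A|$ — in exchange for replacing the exponential blow-up of the span by a genuinely polynomial (in fact, trivial) one, $|\spn{A'}| \le |A|$.

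The standard route, which I would follow, goes through the theory of approximate subgroups. First, the Pl\"{u}nnecke--Ruzsa inequalities upgrade the single bound $|A+A| \le K|A|$ to $|kA - \ell A| \le K^{O(k+\ell)}|A|$ for all fixed $k,\ell$. Next, a quantitative Bogolyubov-type lemma is invoked to locate, inside $2A-2A$, a subspace $H$ with $|H|$ comparable to $|A|$ (up to the relevant quantitative loss). A Ruzsa covering argument — using that $A + H \subseteq 3A - 2A$ has size at most $K^{O(1)}|A|$ — then shows that $A$ is contained in $m = K^{O(1)}$ translates $x_1 + H, \dots, x_m + H$. Arranging $|H| \le \tfrac12|A|$ and pigeonholing over these translates yields $A' = A \cap (x_i+H)$ with $|A'| \ge m^{-1}|A| = K^{-O(1)}|A|$ and $\spn{A'} \subseteq \spn{\{x_i\}\cup H}$, which has size at most $2|H| \le |A|$, exactly as demanded by Conjecture~\ref{conj:PFR}.

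The main obstacle — and the reason the conjecture is open — is precisely the quantitative strength required of the Bogolyubov step. All known proofs of Bogolyubov's lemma lose either an exponential factor (the classical Fourier-analytic argument via Chang's spectral lemma) or, in Sanders' refinement via Croot--Sisask almost periodicity and iteration, a quasi-polynomial factor $\exp(\mathrm{polylog}\,K)$ in controlling how large a subspace $H$ one can find; this is the gap between what is proved and what is conjectured. Equivalently, the reduction above shows that PFR would follow from a \emph{polynomial} Bogolyubov--Ruzsa lemma, i.e.\ from Marton's conjecture that a set of doubling $K$ is covered by $K^{O(1)}$ cosets of a subgroup no larger than $A$. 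So the realistic status of this ``proof plan'' is that it isolates the hard combinatorial core rather than dispatching it, which is exactly why the present paper uses PFR as a hypothesis: the log-rank application is derived from the conjecture, not from a theorem.
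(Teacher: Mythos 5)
You have correctly recognized that Conjecture~\ref{conj:PFR} is a hypothesis of the paper, not something it proves: the paper states the PFR conjecture, cites the Freiman--Ruzsa theorem and Sanders' quasi-polynomial bound (Theorems~\ref{thm:Freiman} and~\ref{thm:Sanders}) as the best unconditional results, and then derives its main theorem \emph{assuming} PFR. There is accordingly no proof in the paper to compare against, and declining to manufacture one is the right call. Your sketch of the standard reduction --- Pl\"{u}nnecke--Ruzsa to control iterated sumsets, a Bogolyubov-type lemma to find a large subspace inside $2A-2A$, Ruzsa covering to cover $A$ by few cosets, then pigeonhole --- is the accepted roadmap, and you correctly locate the obstruction in the quantitative strength of the Bogolyubov step, which is exactly where Sanders' Croot--Sisask-based argument achieves quasi-polynomial rather than polynomial losses (matching the paper's Theorem~\ref{thm:Sanders}). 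One small refinement worth noting: the equivalence you invoke at the end (PFR $\Leftrightarrow$ polynomial Bogolyubov--Ruzsa $\Leftrightarrow$ Marton's covering formulation) is precisely the kind of reformulation the paper points to via Green's survey \cite{Green05}, so your framing is consistent with how the authors themselves position the conjecture.
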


Note that the above conjecture implies that $|\spn{A'}| \leq |A| \leq K^{r}|A'|$.
The PFR conjecture has many other interesting equivalent formulations, see the survey of Green \cite{Green05} for some of them. It is conjectured to hold for subsets of general groups as well and not only for subsets of the group $\F_2^n$ but we will be interested only in the latter case. Significant progress on this conjecture has been achieved recently by Sanders \cite{San10}, using new techniques developed by Croot and Sisask \cite{CS10}. Sanders proved an upper bound on the ratio $\frac {|\spn{A'}|} {|A'|}$ which is quasi-polynomial in $K$:

\begin{thm}[Quasi-polynomial Freiman-Ruzsa Theorem \cite{San10}]\label{thm:Sanders}
Let $A\subset \F_2^n$ be a set such that $|A+A|\leq K|A|$. Then there exists a subset
$A' \subseteq A$ of size at least $K^{-O(\log^{3}K)}|A|$ such that $|\spn {A'}| \leq |A|$.
\end{thm}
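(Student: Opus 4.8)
The plan is to follow the two-stage strategy underlying Sanders' argument: first establish a quasi-polynomial \emph{Bogolyubov--Ruzsa lemma} asserting that a bounded-fold sumset of $A$ contains a large coset of a subspace, and then convert this into the desired large low-span subset of $A$ by a covering argument. In Stage~1 the goal is a subspace $V \le \F_2^n$ with $V \subseteq tA$ for some absolute constant $t$ (here $tA$ denotes the $t$-fold sumset; recall that over $\F_2^n$ one has $2A-2A = A+A+A+A$) and $|V| \ge |A|\cdot K^{-O(\log^3 K)}$; Stage~2 then extracts $A'$. The first stage is the analytic heart and the only genuine obstacle; the second is routine sumset combinatorics.

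\emph{Stage 1 (the main obstacle).} Elementary Fourier analysis --- Bogolyubov's lemma, optionally sharpened via Chang's spectral lemma applied to $\spec_\rho(1_A)$ --- already shows that $A+A+A+A$ contains a subspace, but only of size $|A|\cdot\exp(-\poly(K))$, i.e.\ with exponential rather than quasi-polynomial dependence on $K$. To do better I would invoke the Croot--Sisask almost-periodicity method. Applied to $1_A$ under the hypothesis $|A+A|\le K|A|$ with a parameter $\eps$, it produces a set $T$ with $|T| \ge |A|\cdot K^{-O(\eps^{-2})}$ such that every translate $\tau_t(1_A*1_A)$, $t\in T$, is $\eps$-close to $1_A*1_A$ in a suitable $L^p$ norm with $p = O(\eps^{-2}\log K)$; thus $1_A*1_A$ is highly invariant under translation by the elements of $T$. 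Iterating this almost-invariance --- passing from $T$ to $T+T$ and to further bounded-fold sumsets while tracking the accumulated error through $p$ and the iteration depth --- and then using $(1_A*1_A)(0)=|A|$ together with a pigeonhole/averaging step, one obtains that some bounded-fold sumset of $T$ actually contains a subspace $V$ with $|V| \ge |A|\cdot K^{-O(\log^3 K)}$. Since $T$ sits inside a bounded-fold sumset of $A$, so does $V$, i.e.\ $V\subseteq tA$ for an absolute constant $t$. Carrying out each step with only quasi-polynomial (rather than exponential) loss --- in particular tuning $\eps$ to be of order $1/\log K$ and the iteration depth to be of order $\log K$, so that the total multiplicative loss is $K^{O(\log^3 K)}$ --- is where essentially all the difficulty lies; this is Sanders' main contribution on top of Croot--Sisask.

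\emph{Stage 2 (routine).} Given a subspace $V \subseteq tA$ with $|V| \ge |A|\cdot K^{-O(\log^3 K)}$, we may assume $|V| \le |A|/2$: otherwise replace $V$ by a subspace of it of size in $(|A|/4,|A|/2]$, which still lies in $tA$ and still satisfies $|V| \ge |A|\cdot K^{-O(\log^3 K)}$. Now $A+V \subseteq (t+1)A$, so by the Pl\"unnecke--Ruzsa inequality $|A+V| \le K^{t+1}|A|$. Ruzsa's covering lemma then gives a set $X$ with $|X| \le |A+V|/|V| \le K^{t+1}\cdot|A|/|V| = K^{O(\log^3 K)}$ and $A \subseteq X + V - V = X + V$ (using that $V$ is a subspace). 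By pigeonhole some $x_0 \in X$ satisfies $|A'| := |A \cap (x_0+V)| \ge |A|/|X| \ge |A|\cdot K^{-O(\log^3 K)}$. Fixing any $a_0 \in A'$, we get $A'-a_0 \subseteq V$, hence $\spn{A'} \subseteq V + \F_2 a_0$ and $|\spn{A'}| \le 2|V| \le |A|$. Thus $A'$ has the required size and $|\spn{A'}| \le |A|$.

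It is worth noting why the exponential Freiman--Ruzsa Theorem~\ref{thm:Freiman} does not suffice here: it only bounds $|\spn{A}|$ by $2^{\poly(K)}|A|$ and says nothing about passing to a large subset of small span, so the Croot--Sisask-based input of Stage~1 is genuinely needed for the quasi-polynomial bound; using Theorem~\ref{thm:Freiman} in place of Stage~1 with the same covering argument would give only an exponential-in-$K$ loss.
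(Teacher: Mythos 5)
This theorem is stated in the paper purely as a citation to Sanders~\cite{San10}; the paper itself gives no proof of it, so there is no internal argument to compare your proposal against. Judged on its own terms, your sketch correctly identifies the two-stage architecture of Sanders' argument: a quasi-polynomial Bogolyubov--Ruzsa lemma (a subspace $V\subseteq tA$ with $|V|\ge K^{-O(\log^3 K)}|A|$) followed by a Ruzsa-covering step to pull a large small-span subset out of $A$. Stage~2 as you wrote it is correct and standard; the fix of shrinking $V$ to size at most $|A|/2$ is harmless since any subspace of $V$ still sits inside $tA$, and the Pl\"unnecke--Ruzsa plus covering-plus-pigeonhole chain gives $|\spn{A'}|\le 2|V|\le |A|$ and $|A'|\ge K^{-O(\log^3 K)}|A|$ as required.

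The genuine gap is Stage~1, which is where essentially the entire content of \cite{San10} lives and which your write-up only gestures at. Invoking Croot--Sisask almost-periodicity is the right starting move, but the passage from ``many $L^p$-almost-periods $T$ of $1_A*1_A$'' to ``a bona fide \emph{subspace} of the stated size inside a bounded-fold sumset'' is not a routine iteration: one must upgrade approximate translation-invariance to exact containment, control how the $L^p$ exponent and the error $\eps$ interact across the $O(\log K)$ iteration depth so that the cumulative loss is $K^{O(\log^3 K)}$ and not worse, and at some point convert the analytic statement into a structural one (Sanders does this via a Bogolyubov/Chang-type spectral step applied to the almost-period set, not merely by pigeonhole on $(1_A*1_A)(0)$). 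Saying that ``carrying out each step with only quasi-polynomial loss is where essentially all the difficulty lies'' is an accurate description of the problem, but it is not a proof of the theorem --- as written, the proposal is a faithful roadmap to \cite{San10} rather than an argument that establishes Theorem~\ref{thm:Sanders}.
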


 Our main result is the following:

\begin{thm}[Main.]\label{thm:main}
Assuming the PFR Conjecture \ref{conj:PFR}, for every $\{0,1\}$-valued matrix M \[\cc(M) = O(\rnk(M)/\log \rnk(M)).\]
\end{thm}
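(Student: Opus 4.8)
The plan is to carry out the two-stage strategy announced in the introduction: first sharpen, under the PFR Conjecture~\ref{conj:PFR}, the \emph{approximate duality} principle of \cite{BZ11}, and then plug the sharpened principle into the Nisan--Wigderson methodology \cite{NW95} for converting ``large monochromatic rectangle'' statements about low-rank matrices into communication protocols.

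For the first stage, recall the approximate-duality set-up: given $A,B\subseteq\F_2^n$ with inner-product bias $\bigl|\E_{a\in A,\,b\in B}(-1)^{\inp{a,b}}\bigr|\ge\eps$, one seeks subsets $A'\subseteq A$ and $B'\subseteq B$, as large as possible, on which $\inp{a,b}$ takes a fixed value. I would proceed through the following chain. (i) A bias of $\eps$ forces, by a Fourier/averaging argument, a linear functional that is nearly constant on large parts of $A$ and of $B$, and more generally forces $A$ and $B$ to correlate with cosets, which via Pl\"unnecke-type estimates yields a pair of subsets of $A$ and $B$ of controlled doubling. (ii) Apply the PFR Conjecture to these subsets to pass from controlled doubling to large subsets $A''\subseteq A$, $B''\subseteq B$ each contained in an affine subspace of $\F_2^n$ of codimension essentially $\log(1/\eps)$. (iii) Inside these low-codimension subspaces, pin down a genuinely monochromatic rectangle by a pigeonhole/linear-algebra argument, at a multiplicative cost exponential in the codimension lost. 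The content of reorganizing \cite{BZ11} is to make the cumulative loss only $2^{-\mathrm{poly}\log(1/\eps)}$ (ideally $2^{-O(\log^{2}(1/\eps))}$) times the ambient sizes; this is precisely the improvement that will later produce the extra logarithmic factor in the communication bound.

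For the second stage, I would first use the Nisan--Wigderson reduction \cite{NW95} to replace the target bound $\cc(M)=O(\rnk(M)/\log\rnk(M))$ by the statement that every $\{0,1\}$-valued matrix of rank $r$ contains a monochromatic combinatorial rectangle of relative size $2^{-O(r/\log^{2}r)}$ (after which the NW recursion, whose per-phase cost is $O(\log(1/\delta))$ bits and which drives the rank down quickly, sums the phase costs to $O(r/\log r)$). To exhibit such a rectangle I would: normalize $M$ to a $\pm1$ matrix and delete duplicate rows and columns, which is legitimate because a rank-$r$ $\{0,1\}$ matrix has at most $2^{r}$ distinct rows, so $|X|,|Y|\le 2^{O(r)}$; use the spectral inequality $\|M\|_{\mathrm{op}}\ge\|M\|_{F}/\sqrt{r}$ together with a threshold-rounding of the leading singular vectors to $\{0,1\}$ indicator vectors to locate a not-too-small sub-rectangle on which one color has density at least $\tfrac12+\eps$ with $\eps\ge1/\mathrm{poly}(r)$; and then, on that sub-rectangle, represent the entries as $\F_2$-inner products of vectors attached to its rows and columns, so that a color density of $\tfrac12+\eps$ is exactly an inner-product bias of $\eps$ and the sharpened approximate-duality statement from the first stage applies. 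This yields a genuinely monochromatic rectangle of relative size $2^{-\mathrm{poly}\log(1/\eps)}=2^{-\mathrm{poly}\log r}$ inside the sub-rectangle, hence of relative size $2^{-O(r/\log^{2}r)}$ inside $M$; feeding this into the NW recursion gives the theorem.

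The hard part will be the quantitative control. The whole scheme improves on the unconditional $O(\rnk(M))$ bound only if the loss incurred in the first stage --- especially the exponential loss when descending from an approximately-monochromatic set pair to a truly monochromatic rectangle inside a low-codimension subspace --- is kept as small as $2^{-O(\log^{2}(1/\eps))}$, so proving approximate duality with exactly this dependence from PFR is the central obstacle. A second, bookkeeping-type difficulty is to coordinate the parameters across the two stages: the bias $\eps$ extracted from the rank, the relative size of the spectral sub-rectangle, the codimension lost to PFR, and the depth of the NW recursion must all be chosen so that the losses compose to $O(r/\log r)$ rather than $O(r)$ or worse, and one must ensure that the $\F_2$-representation on the sub-rectangle does not inflate the ambient dimension relevant to the additive-combinatorics step beyond $\mathrm{poly}(r)$.
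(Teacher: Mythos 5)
Your two-stage architecture (sharpen approximate duality under PFR, then feed the resulting monochromatic-rectangle bound into the Nisan--Wigderson recursion) is exactly the plan the paper follows, and your sketch of the second stage --- NW's $r^{-3/2}$-sized, $r^{-3/2}$-biased submatrix via a spectral argument, followed by an $\F_2$-factorization of that submatrix as $\inp{a_i,b_j}_2$, followed by the leaf-counting recursion --- is essentially right. But the quantitative target you set for stage~1, and the method you propose for achieving it, both have gaps serious enough to sink the argument.

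The central problem is that you want approximate duality with loss $2^{-\mathrm{poly}\log(1/\eps)}$, i.e.\ a loss that depends only on the bias $\eps$ and not on the ambient dimension $n$. That is impossible. Taking $A=B$ to be the Hamming slice of weight $c'\sqrt{n}$ in $\F_2^n$ with $c'$ small gives $D(A,B)\ge 0.99$, yet any $A'\subseteq A$, $B'\subseteq B$ with $D(A',B')=1$ must shrink one of the two sets by a factor $2^{-\Omega(\sqrt{n})}$ (this is Remark~\ref{lem:main-technical}\emph{ff.} in the paper). So a bound of the form $2^{-O(\log^2(1/\eps))}$ is not merely ``the central obstacle'' --- it is false, even for constant $\eps$. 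What is actually achievable, and what the paper proves (Lemma~\ref{lem:main-technical}), is a loss of $2^{-cn/\log n}$ for $\eps\ge 2^{-\sqrt{n}}$; after setting $n\approx r$ this is the $\delta\ge 2^{-O(r/\log r)}$ that the NW recursion needs, and it gives $\cc(M)=O(r/\log r)$. Your target of a $2^{-O(r/\log^2 r)}$-sized rectangle is likewise stronger than necessary and stronger than what is proved; $2^{-O(r/\log r)}$ already suffices for the recursion $L(m,r)\le L(m,r/2)+L(m(1-\delta),r)$ to close.

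Relatedly, your (i)--(iii) for stage~1 is a one-shot application of BSG + PFR followed by a pigeonhole inside a low-codimension affine subspace. That route only gives the \emph{exponential}-in-$n$ loss $2^{-\delta n}$ already obtained in \cite{BZ11} (Theorem~\ref{thm:AD-exp-loss}), which is not enough to beat the trivial $\cc(M)=O(r)$. The missing idea is iterative: starting from $A_1=A\cap\spec_{\eps/2}(B)$, build $A_2\subseteq A_1+A_1,\ A_3\subseteq A_2+A_2,\dots$, each staying inside a spectrum of slowly-degrading bias; since all $A_i$ live in $\F_2^n$, by pigeonhole some $t\le n/\log K$ has $|A_{t+1}|\le K|A_t|$. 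Apply BSG + PFR only at that level, to get a large subset of $A_t$ with small span, then invoke the span-based approximate-duality lemma (Lemma~\ref{lem:fourier}), and finally pull the resulting dual pair back down through $A_{t-1},\dots,A_1$ using a connectivity argument on the sum-graph (Propositions~\ref{lem:induc base} and~\ref{lem:induc step}). It is precisely this iteration and pull-back --- absent from your sketch --- that turns $2^{-\delta n}$ into $2^{-O(n/\log n)}$. Without it, the loss you will actually obtain under PFR is $2^{-\Omega(r)}$, and the final bound collapses to the known $\cc(M)=O(r)$.
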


We end this section by mentioning two other recent applications of the PFR Conjecture to theoretical computer science. The first application is to the area of
low-degree testing, and was discovered by Lovett \cite{Lov10} and, independently, by Green and Tao \cite{GT10}. The second application is to the construction of two-source extractors due to Ben-Sasson and Zewi \cite{BZ11}. The latter paper also introduced the notion of approximate duality which plays a central role in our proof method so we describe it now.

\iffalse

We end this section by mentioning two other recent applications of the PFR Conjecture to theoretical computer science. The first application is an improvement in the number of queries needed in Samorodnitsky's low degree testing. It was discovered independently by Lovett and Green and Tao  that improving the number of queries to depend polynomially on the error parameter (as opposed to exponentially) is in fact equivalent to the PFR Conjecture.

The second application is due to Ben-Sasson and Zewi who introduced the notion of approximate duality, showed that the PFR Conjecture implies bounds on approximate duality, and showed how these bounds could be used for the construction of two-source extractors. Approximate duality also plays a central role in our proof method, as we explain next.

\fi

\subsection{Approximate duality}

Our main technical contribution is improved bounds on approximate duality, assuming the PFR conjecture. These new bounds lie at the heart of our proof of the Main Theorem \ref{thm:main}.

The notion of approximate duality was first introduced in \cite{BZ11}. For $A,B \subseteq \F_2^n$, we define the {\em duality measure} of $A, B$ in (\ref{eq:duality-measure-introduction}) as an estimate of how `close' this pair is to being dual

\begin{equation}\label{eq:duality-measure-introduction}
  D(A,B) := \bigg|\E_{a\in A, b\in B}\left[(-1)^{\inp{a,b}_2}\right]\bigg|,
\end{equation}
 where $\inp{a,b}_2$ denotes the binary inner-product of $a,b$ over $\F_2$, defined by $\inp{a,b}_2=\sum_{i=1}^{n} a_i\cdot b_i$ where all arithmetic operations are
 in $\F_2$.

 \begin{rem}
 The duality measure can be alternatively defined as the discrepancy of the inner product function on the rectangle $A \times B$
 (up to a normalization factor of $ \frac {2^n} {|A||B|}$). Nevertheless we chose to use the term 'duality measure' instead of 'discrepancy' because
 of the algebraic context in which we use it, as explained below.
 \end{rem}

It can be verified that if $D(A,B)=1$ then $A$ is contained in an affine shift of $B^\perp$ which is the space dual to the linear $\F_2$-span of $B$. The question is what can be said about the structure of $A,B$ when $D(A,B)$ is sufficiently large, but strictly smaller than $1$. The following theorem from \cite{BZ11} says that if the duality measure is a constant very close to 1 (though strictly smaller than $1$) then there exist relatively large subsets $A' \subseteq A$, $B' \subseteq B$, such that $D(A',B')=1$.

\begin{thm}[Approximate duality for nearly-dual sets, \cite{BZ11}]\label{thm:duality-thm}
  For every $\delta>0$ there exists a constant $\epsilon>0$ that depends only on $\delta$, such that if $A,B\subseteq\F_2^n$ satisfy $D(A,B)\geq 1-\epsilon$, then there exist subsets $A'\subseteq A, |A'|\geq \frac14 |A|$ and $B'\subseteq B, |B'|\geq 2^{-\delta n}|B|$, such that $D(A',B')=1$.
\end{thm}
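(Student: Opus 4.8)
\medskip
\noindent\textbf{Proof plan.}
The naive attempt --- take $A'$ to be almost all of $A$ and set $B'=B\cap\spn{A'}^{\perp}$ --- breaks down because $B$ may be disjoint from $\spn{A'}^{\perp}$. Instead the plan is to look for subsets on which $\inp{\cdot,\cdot}_2$ is constant but with an \emph{arbitrary} constant value: this reduces the task to finding a single coset of $\spn{A'}^{\perp}$ that contains at least a $2^{-\delta n}$ fraction of $B$, which I will obtain from a Shannon--entropy bound.

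\emph{Regularizing $A$.} Let $c\in\zo$ be the more frequent value of $\inp{a,b}_2$ when $(a,b)$ is uniform on $A\times B$. Since $D(A,B)\ge 1-\e$ forces $\Pr_{a\in A,\,b\in B}[\inp{a,b}_2\neq c]\le\e/2$, Markov's inequality gives a subset $A_1\subseteq A$ with $|A_1|\ge\frac12|A|$ such that $\Pr_{b\in B}[\inp{a,b}_2\neq c]\le\sqrt{\e}$ for every $a\in A_1$; that is, every $a\in A_1$ acts as a coordinate function on $B$ that is $\sqrt{\e}$-biased toward $c$.

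\emph{Locating $B'$.} Put $W=\spn{A_1}$ and $d=\dim W\le n$, and choose a basis $a_1,\dots,a_d$ of $W$ \emph{from within $A_1$} (possible since $A_1$ spans $W$). For $b$ uniform in $B$ let $Y=(\inp{a_1,b}_2,\dots,\inp{a_d,b}_2)\in\F_2^{d}$. By $\F_2$-linearity $Y$ determines $\inp{a,b}_2$ for all $a\in W\supseteq A_1$, and $Y(b)=Y(b')$ exactly when $b-b'\in W^{\perp}$, so the fibres of $Y$ are precisely the cosets of $W^{\perp}$ that meet $B$. Each bit $Y_i$ satisfies $\Pr[Y_i=c]\ge 1-\sqrt{\e}$, hence $H(Y_i)\le H_2(\sqrt{\e})$ with $H_2$ the binary entropy function, and subadditivity of entropy gives $H(Y)\le d\,H_2(\sqrt{\e})\le n\,H_2(\sqrt{\e})$. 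Therefore some fibre has probability at least $2^{-H(Y)}\ge 2^{-H_2(\sqrt{\e})\,n}$. Choosing $\e=\e(\delta)>0$ small enough that $\e\le\frac14$ and $H_2(\sqrt{\e})\le\delta$, this produces a coset $v+W^{\perp}$ with $|B\cap(v+W^{\perp})|\ge 2^{-\delta n}|B|$; set $B':=B\cap(v+W^{\perp})$.

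\emph{Trimming $A_1$ to $A'$.} For each $a\in A_1$ the value $\inp{a,b}_2$ is the same for every $b\in B'$ --- it depends only on $a$ and on the coset $v+W^{\perp}$ --- so call it $g(a)\in\zo$. Partitioning $A_1$ according to $g$ and letting $A'$ be the larger class gives $|A'|\ge\frac12|A_1|\ge\frac14|A|$, and now $\inp{a,b}_2$ is constant on $A'\times B'$, so $D(A',B')=1$. The delicate point, on which the whole argument turns, is the entropy estimate: it works only because the test coordinates $a_1,\dots,a_d$ may be taken inside $A_1$ (so they inherit the $\sqrt{\e}$ bias) while still forming a basis of $W$, and because $d\le n$ is the only bound on their number that enters --- this is exactly why $\e$ can be chosen as a function of $\delta$ alone, independently of $n$ and of $|A|,|B|$.
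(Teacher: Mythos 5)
Your argument is correct. One small note first: the paper you are comparing against does not actually prove this theorem --- it is quoted from Ben-Sasson--Zewi (STOC 2011) without proof --- so there is no in-paper argument to line up against. Judged on its own, your proof is sound: the Markov step gives $\Pr_{a\in A}\bigl[\Pr_{b\in B}[\inp{a,b}_2\neq c]>\sqrt{\e}\bigr]\le\sqrt{\e}/2\le\tfrac12$, so $A_1$ has the claimed density; choosing a basis of $W=\spn{A_1}$ inside $A_1$ is legitimate; the fibres of $Y$ are exactly the cosets of $W^\perp$ meeting $B$; $H(Y_i)\le H_2(\sqrt{\e})$ holds because $\sqrt{\e}\le\tfrac12$ once $\e\le\tfrac14$; subadditivity and the standard bound $\max_y \Pr[Y=y]\ge 2^{-H(Y)}$ give the heavy coset; and since $\inp{a,\cdot}_2$ is constant on any coset of $W^\perp$ for $a\in W$, splitting $A_1$ by that constant yields $A'$ with $|A'|\ge\tfrac14|A|$ and $D(A',B')=1$. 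The dependence $\e=\e(\delta)$ via $H_2(\sqrt{\e})\le\delta$ is indeed independent of $n,|A|,|B|$, as required. This is a clean, elementary, and essentially information-theoretic route; the original BZ11 development is framed around the Fourier spectrum of $B$ (the same object that underlies Lemma~\ref{lem:fourier} in this paper), but the content --- find a regular $A_1$, pass to $W^\perp$, locate one heavy coset, then split $A_1$ by the induced constant --- is the same combinatorial skeleton, and your entropy bookkeeping is a perfectly good substitute for the pigeonhole/counting used there.
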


It is conjectured that a similar result holds also when the duality measure is relatively small, and in particular when it tends to zero as $n$ goes to infinity. Furthermore, the following theorem from \cite{BZ11} gives support to this conjecture, by showing that such bounds indeed follow from the PFR conjecture.

\begin{thm}[Approximate duality assuming PFR, \cite{BZ11}]\label{thm:AD-exp-loss} Assuming the PFR Conjecture \ref{conj:PFR}, for every pair of constants $\alpha,\delta>0$ there exists a constant $\zeta>0$, depending only on $\alpha$ and $\delta$, such that the following holds. If $A,B\subseteq \F_2^n$ satisfy $|A|,|B|>2^{\alpha n}$ and $D(A,B) \geq 2^{-\zeta n}$, then there exist subsets
  $A'\subseteq A$,  $|A'|\geq 2^{-\delta n}|A|$ and  $B'\subseteq B$,  $|B'|\geq 2^{-\delta n}|B|$ such that $D(A',B')=1$.
\end{thm}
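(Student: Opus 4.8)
My plan is to extract, from the lower bound on the duality measure, enough additive structure in $A$ and $B$ to pin down --- via the PFR Conjecture --- large subsets of $A$ and $B$ that sit densely inside subspaces, and then to read off an exactly-dual subrectangle from the resulting coset structure. I begin with a normalization: appending a constant $1$-coordinate to every vector of $A$ and of $B$ (passing from $\F_2^n$ to $\F_2^{n+1}$, which perturbs all relative densities and the exponent $\zeta$ only by a $1\pm o(1)$ factor) makes the bias nonnegative, so writing $\widehat B(x):=\E_{b\in B}[(-1)^{\inp{x,b}_2}]$ and $\widehat A(x):=\E_{a\in A}[(-1)^{\inp{x,a}_2}]$ I may assume $\E_{a\in A}[\widehat B(a)]\ge 2^{-\zeta n}$.

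The first step is a single Cauchy--Schwarz, which turns the hypothesis into an additive statement:
\[
 2^{-2\zeta n}\;\le\;\Big(\E_{a\in A}\big[\widehat B(a)\big]\Big)^{2}\;\le\;\E_{a\in A}\big[\widehat B(a)^2\big]\;=\;\E_{a\in A}\E_{b,b'\in B}\big[(-1)^{\inp{a,\,b+b'}_2}\big]\;=\;\E_{b,b'\in B}\big[\widehat A(b+b')\big].
\]
Since $\widehat A\le 1$, at least a $2^{-2\zeta n}/2$ fraction of the pairs $(b,b')\in B\times B$ satisfy $b+b'\in S:=\{x:\widehat A(x)\ge 2^{-2\zeta n}/2\}$, and symmetrically for $A\times A$ against the large spectrum of $\widehat B$. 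By Parseval $\sum_x\widehat A(x)^2=2^n/|A|$, so $|S|\le 2^{(1-\alpha)n+O(\zeta n)}$, which is exponentially smaller than $|\F_2^n|$ once $\zeta$ is small relative to $1-\alpha$. I would then run a Balog--Szemer\'edi--Gowers type ``many sums in a small set'' argument to extract $A_1\subseteq A$, $B_1\subseteq B$ of density $2^{-O(\zeta n)}$ with bounded doubling, and feed $A_1,B_1$ into the PFR Conjecture to obtain $A_2\subseteq A_1$, $B_2\subseteq B_1$, still of density $2^{-O(\zeta n)}$ in $A,B$, that are \emph{dense} subsets of subspaces $V_A:=\spn{A_2}$ and $V_B:=\spn{B_2}$. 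This is exactly the structure I want: for every fixed $b\in V_B\supseteq B_2$ the value $\inp{a,b}_2$ is constant as $a$ ranges over a coset of $V_B^{\perp}$, so a pigeonhole over the $2^{\dim V_B}$ cosets of $V_B^{\perp}$ picks one containing a $2^{-\dim V_B}$ fraction of $A$; on that coset $\inp{\cdot,b}_2$ depends only on $b$, and splitting $B_2$ by the value of this bit leaves a subrectangle $\tilde A\times\tilde B$ with $D(\tilde A,\tilde B)=1$. (If, working with the quotient by $V_B^{\perp}$ rather than with exact cosets, one only reaches $D(\tilde A,\tilde B)\ge 1-\eps_0$, one then applies Theorem~\ref{thm:duality-thm} with parameter $\delta/3$ as a cleanup step, its $1/4$- and $2^{-\delta n/3}$-losses being absorbed into the final budget.) Choosing $\zeta=\zeta(\alpha,\delta)$ small enough swallows every $2^{O(\zeta n)}$ loss, so $|\tilde A|,|\tilde B|\ge 2^{-\delta n}|A|,|B|$ --- \emph{provided} $\dim V_A$ and $\dim V_B$ are at most, say, $\delta n/2$.

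The hard part is precisely guaranteeing that bound on $\dim V_A,\dim V_B$. The crude estimates above only bound the doubling constants from the Balog--Szemer\'edi--Gowers step by $2^{O((1-2\alpha)n)}$, so PFR a priori produces subspaces of dimension as large as $\approx(1-\alpha)n$ --- useless for the coset pigeonhole when $\delta$ is small. I would fix this by a bootstrap: after PFR the set $A_2$ (say) is genuinely dense inside the group $V_A$, which has size at most $2^{\alpha n}$, so I rerun the whole argument \emph{inside $V_A$}, where the analogue of the parameter $\alpha$ has strictly improved, and iterate $O_{\alpha,\delta}(1)$ times --- each round shrinking the ambient group and hence the eventual subspace dimension --- until that dimension drops below $\delta n/2$ while all densities stay above $2^{-\delta n}$. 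Getting this iteration to terminate with honest bookkeeping, and fixing $\zeta$ small enough in terms of $\alpha,\delta$ that the accumulated $2^{O(\zeta n)}$ losses never exhaust the budget, is what I expect to be the technical heart of the proof.
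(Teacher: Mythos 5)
The paper cites this theorem from \cite{BZ11} and does not reprove it; the relevant comparison is to the proof of the strictly stronger Lemma~\ref{lem:improved ADC}, which implies it. Your first pass has a quantitative gap that no choice of $\zeta$ can repair. After Cauchy--Schwarz and Markov you have $\Pr_{b,b'\in B}[b+b'\in S]\ge 2^{-O(\zeta n)}$ with $|S|\le 2^{(1-\alpha+O(\zeta))n}$ by Parseval, hence $C:=|S|/|B|\le 2^{(1-2\alpha+O(\zeta))n}$. The Balog--Szemer\'edi--Gowers theorem (Theorem~\ref{thm:BSG}) loses a factor $f(K,C)=\poly(K,C)$ \emph{in the size} of the extracted set, and $\poly(C)=2^{\Theta((1-2\alpha)n)}$ is controlled by $\alpha$, not by $\zeta$: for any fixed $\alpha<1/2$ and sufficiently small $\delta$ the set $B_1$ you get out of BSG is already smaller than $2^{-\delta n}|B|$ before PFR is ever invoked. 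You acknowledge a nearby problem (the dimensions of $V_A,V_B$) but the size loss is already fatal, and the bootstrap you propose rests on a false premise: PFR only gives $|\spn{A_2}|\le|\tilde A_1|\le|A|$, and the hypothesis $|A|>2^{\alpha n}$ is a \emph{lower} bound, not an upper bound, so there is no reason $V_A$ has size at most $2^{\alpha n}$. Moreover the density of $A_2$ inside $V_A$ is itself only $2^{-\Theta((1-2\alpha)n)}$, not a constant, so the effective ``$\alpha$'' inside $V_A$ need not improve and the iteration has no visible progress measure.

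The missing ingredient --- the one the paper uses in Section~\ref{sec:improved AD} --- is to avoid ever applying BSG to $A$ or $B$ directly against its large spectrum. One builds the iterated-sumset ladder $A_1=A\cap\spec_{\eps_1}(B)$, $A_{i+1}\subseteq(A_i+A_i)\cap\spec_{\eps_{i+1}}(B)$ with $\eps_{i+1}=\eps_i^2/2$, and observes that since every $A_i\subseteq\F_2^n$, some rung $t\le n/\log K$ satisfies $|A_{t+1}|\le K|A_t|$. At that rung BSG is applied with \emph{both} parameters under control ($K_{\mathrm{BSG}}=n/\eps_{t+1}$ and $C=K$, where $K$ is a free parameter), PFR then yields a set dense in a span of controlled size, Lemma~\ref{lem:fourier} produces an exactly dual subrectangle, and the connected-component argument of Proposition~\ref{lem:induc step} pulls it back through the ladder to $A_1,B$. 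Taking $\eps=2^{-\zeta n}$, $t=\Theta(1/\delta)$ (hence $\log K=\Theta(\delta n)$) and $\zeta\approx\delta\,2^{-\Theta(1/\delta)}$ in Lemma~\ref{lem:improved ADC} recovers the stated theorem with losses $2^{-O(\delta n)}$, and --- as the paper remarks --- without any dependence on $\alpha$ at all.
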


\iffalse

Note that the loss in the sizes of the sets $A',B'$ in the above theorem is exponential in $n$ of the form $2^{-\delta n}$ for arbitrarily small constant $\delta$. The main technical contribution of this paper is an improvement of the above theorem which gives a loss $2^{-cn/\log n}$, which is only sub-exponential in $n$. More precisely, the our main technical lemma is the  following:

\fi

Our main technical contribution is an improved bound on approximate duality assuming the PFR conjecture.

\begin{lem}[Main technical lemma]\label{lem:main-technical}
Suppose that $A,B \subseteq \F_2^n$ satisfy $D(A,B) \geq 2^{-\sqrt n}$. Then assuming the PFR conjecture \ref{conj:PFR}, there exist subsets $A',B'$ of $A,B$ respectively such that $D(A',B')=1$, and  $|A'|\geq 2^{-cn/ \log n}|A|$, $|B'|\geq 2^{-cn/ \log n}|B|$ for some absolute constant $c$.
\end{lem}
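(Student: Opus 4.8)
The plan is to follow the same strategy as Theorem~\ref{thm:AD-exp-loss} --- extract additive structure from the large duality measure by Fourier analysis and Cauchy--Schwarz, run the Balog--Szemer\'edi--Gowers theorem, invoke PFR, and pull back to a popular coset --- but to push every quantitative loss down to $2^{-O(n/\log n)}$.

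First I would clear away the easy cases. Translating, assume $0\in A$ and $0\in B$. If $\min(|A|,|B|)\le 2^{cn/\log n}$, say $|A|\le 2^{cn/\log n}$, then for any fixed $a_0\in A$ the majority slice $B'=\{b\in B:\inp{a_0,b}_2=\sigma\}$ has $|B'|\ge|B|/2$ and $D(\{a_0\},B')=1$, so $A'=\{a_0\}$, $B'$ already work. Hence assume $|A|,|B|\ge 2^{cn/\log n}$. Expanding $D(A,B)$ as a sum of Fourier coefficients of $1_B$, Cauchy--Schwarz gives $|A||B|\le 2^{n}/D(A,B)^2\le 2^{n+2\sqrt n}$; I will use this repeatedly.

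Next, with $\mu_A(x)=\E_{a\in A}(-1)^{\inp{a,x}_2}$, Cauchy--Schwarz in $a$ gives $\E_{b,b'\in B}[\mu_A(b+b')]\ge D(A,B)^2\ge 2^{-2\sqrt n}$, so the set $S=\{x:|\mu_A(x)|\ge 2^{-3\sqrt n}\}$ (which has $|S|\le 2^{O(\sqrt n)}2^n/|A|$ by Parseval) absorbs a $2^{-O(\sqrt n)}$ fraction of the pairs $(b,b')\in B\times B$. A further Cauchy--Schwarz bounds the additive energy of $B$ below by $2^{-O(\sqrt n)}|B|^4/|S|$, so Balog--Szemer\'edi--Gowers produces $B_1\subseteq B$ with $|B_1|\ge K^{-O(1)}|B|$ and $|B_1+B_1|\le K|B_1|$, where $K=\big(2^{\sqrt n}\cdot 2^{n}/(|A||B|)\big)^{O(1)}$. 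Shrinking $B_1$ to $B_1\cap S$ loses only another $2^{-O(\sqrt n)}$, so assume $B_1\subseteq S$. Now apply the PFR Conjecture~\ref{conj:PFR} to $B_1$: it yields $B_2\subseteq B_1$ with $|B_2|\ge K^{-O(1)}|B_1|$ and $|\spn{B_2}|\le|B_1|\le|B|$. Put $V=\spn{B_2}$. Because $B_2\subseteq S$, the indicator $1_A$ has a $K^{-O(1)}2^{-O(\sqrt n)}$ fraction of its Fourier energy on characters indexed by $V$, so a second-moment argument gives a coset $z_0+V^{\perp}$ with $|A\cap(z_0+V^{\perp})|\ge K^{-O(1)}2^{-O(\sqrt n)}|A|$. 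Set $A'=A\cap(z_0+V^{\perp})$; for $a\in A'$ and $b\in B_2\subseteq V$ we have $\inp{a,b}_2=\inp{z_0,b}_2$, which depends only on $b$, and restricting $B_2$ to its majority slice $B'$ makes this constant. Then $D(A',B')=1$ and $|A'|/|A|,\,|B'|/|B|\ge K^{-O(1)}2^{-O(\sqrt n)}$.

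The crux --- and the step I expect to be hardest --- is showing $\log K=O(n/\log n)$ for \emph{all} admissible sizes, i.e.\ $2^{cn/\log n}\le|A|,|B|$ with $|A||B|\le 2^{n+2\sqrt n}$. When $|A||B|\ge 2^{n-O(\sqrt n)}$ the formula above already gives $\log K=O(\sqrt n)$, but in the ``intermediate'' regime $|A||B|\ll 2^n$ the set $S$ is far larger than $B$ and this bound degenerates. To handle that regime I would exploit that $D(A,B)\ge 2^{-\sqrt n}$ is a strong hypothesis: it forces a $2^{-O(\sqrt n)}$ fraction of $A$ (resp.\ $B$) to lie in the large spectrum $\spec_{\rho}(1_B)$ (resp.\ $\spec_{\rho}(1_A)$) with $\rho=2^{-O(\sqrt n)}$, and such spectra have additive doubling polynomial in $1/\rho$; iterating the dimension reduction this provides --- trimming the ambient space while losing only a constant factor at each of $O(n/\log n)$ rounds --- is precisely what turns the $2^{-\sqrt n}$ of the hypothesis into the $2^{-cn/\log n}$ of the conclusion. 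Concretely this is a careful parameter-tracking rerun of the proof of Theorem~\ref{thm:AD-exp-loss} with its parameters $\alpha,\delta$ taken to be $\Theta(1/\log n)$ rather than absolute constants, and making those dependencies explicit is where the real work lies.
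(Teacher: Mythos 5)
Your setup (Cauchy--Schwarz on the duality measure, the large spectrum, Balog--Szemer\'edi--Gowers, PFR, then restriction to a popular coset) matches the paper's toolkit, your reduction to the regime $2^{cn/\log n}\le|A|,|B|$ with $|A||B|\le 2^{n+2\sqrt n}$ is correct, and you correctly diagnose that a single BSG$+$PFR pass degenerates when $|A||B|\ll 2^n$. But the fix you sketch for that regime is a genuine gap, and it is exactly where the paper's new idea lives. ``Rerunning Theorem~\ref{thm:AD-exp-loss} with $\alpha,\delta=\Theta(1/\log n)$'' is not a proof: that theorem's constant $\zeta$ depends on $\alpha,\delta$ in an unquantified way, and your plan would need $\zeta\ge 1/\sqrt n$, which the statement does not give you. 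And ``iterating the dimension reduction'' by repeatedly passing to a large spectrum costs $\poly(1/\rho)=2^{\Theta(\sqrt n)}$ per round; the claim that each of $\Theta(n/\log n)$ rounds loses only a constant factor is unjustified, and at face value the compounded loss $2^{\Theta(\sqrt n\cdot n/\log n)}$ is far worse than the target.

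The paper's mechanism is an iteration in \emph{sumsets}, not in the ambient dimension, and that distinction is the crux. Set $A_1:=A\cap\spec_{\epsilon/2}(B)$ and inductively $A_i\subseteq(A_{i-1}+A_{i-1})\cap\spec_{\epsilon_i}(B)$ with $\epsilon_i=\epsilon_{i-1}^2/2$, keeping only sums $a+a'$ whose representation count in $A_{i-1}$ lies in a fixed dyadic band. Since every $A_i\subseteq\F_2^n$, the sizes cannot grow by a factor $K$ more than $n/\log K$ times, so there is an index $t\le n/\log K$ with $|A_{t+1}|\le K|A_t|$; BSG$+$PFR and the small-span approximate-duality lemma are applied \emph{only at that index}, where the doubling is honestly $\poly(nK/\epsilon_{t+1})$. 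Taking $K=2^{4n/\log n}$ gives $t\approx\tfrac14\log n$ and $\epsilon_{t+1}\ge(\epsilon/2)^{2^t}\approx 2^{-n^{3/4}}$, so the dominant loss is $\poly(1/K)=2^{-O(n/\log n)}$ --- that is where the exponent $n/\log n$ comes from, not from shrinking $\alpha,\delta$. The other missing piece is the pull-back from $A_t$ to $A$: having found dual $A'_t\subseteq A_t$, $B'_t\subseteq B$, the paper builds a graph on $A_{i-1}$ with an edge $(a,a')$ whenever $a+a'\in A'_i$, takes a large connected component, and splits it by the parity of $\inp{\cdot,b}_2$; the dyadic popularity restriction is exactly what bounds the per-step loss by $\poly(n)$. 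Neither the forward sumset iteration nor this backward graph argument appears in your proposal, and both are essential.
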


\begin{rem}
The bound on $\min\left\{\frac{|A'|}{|A|},\frac{|B'|}{|B|}\right\}$ --- which in the lemma above is $2^{-cn/\log n}$ --- cannot be improved beyond $2^{-O(\sqrt{n})}$ even if we assume $D(A,B)>0.99$. To see this take $A=B={n\choose c'\sqrt{n}}$ to be the set of all $\{0,1\}$-vectors with exactly $c'\sqrt{n}$ ones, where $c'$ is a sufficiently small positive constant that guarantees $D(A,B)\geq 0.99$. It can be verified that if $A'\subset A,B'\subset B$ satisfy $D(A',B')=1$ then the smaller set is of size $2^{-\Omega(\sqrt{n})}\cdot |A|$.
\end{rem}

The proof of Lemma~\ref{lem:main-technical} appears in Section \ref{sec:improved AD}. Note that this lemma improves on Theorem \ref{thm:AD-exp-loss} in two ways. First, in the lemma the ratios $|A'|/|A|$, $|B'|/|B|$ are bounded from below by $2^{-cn/ \log n}$, whereas in Theorem  \ref{thm:AD-exp-loss} we only get a smaller bound of the form $2^{-\delta n}$ for some constant $\delta > 0$.
 Note however that this improvement comes with a requirement that the duality measure $D(A,B)$ is larger --- in the above lemma we require that it is at least $2^{-\sqrt n}$ while in Theorem \ref{thm:AD-exp-loss} we only require it to be at least $2^{-\zeta n} \ll 2^{-\sqrt n}$. We note however that the bound $D(A,B) \geq 2^{-\sqrt{n}}$ can be replaced by $D(A,B) \ge \exp(-n^{1-\eps})$
 for any $\eps>0$ at the price of a larger constant $c=c(\eps)$.
 The second improvement is that the parameters in the lemma do not depend on the sizes of the sets $A$ and $B$ whereas in
 Theorem \ref{thm:AD-exp-loss} they do depend on $|A|,|B|$.

These two improvements are precisely what enables us to prove the new upper bound of $O(\rnk(M) / \log \rnk(M))$ on the communication complexity of $\{0,1\}$-valued matrices assuming the PFR conjecture, and we go on to sketch our proof method next.

\subsection{Proof overview}

First we show how our Main Theorem \ref{thm:main} is deduced from the improved bounds on approximate duality in Lemma \ref{lem:main-technical}. Then we give an overview of the proof of Lemma \ref{lem:main-technical} itself.

\enote{Changed the overview to something more informal.}

\paragraph{From approximate duality to communication complexity upper bounds.}
We follow the approach of Nisan and Wigderson from \cite{NW95}. Let the {\em size} of a matrix $M$ be the number of entries in it and if $M$ is $\{0,1\}$-valued let $\delta(M)$ denote its {\em (normalized) discrepancy}, defined as the absolute value of the difference between the fraction of zero-entries  and one-entries in $M$. Informally, discrepancy measures how ``unbalanced'' is $M$, with $\delta(M)=1$ when $M$ is {\em monochromatic} --- all entries have the same value --- and $\delta(M)=0$ when $M$ is completely balanced.

Returning to the work of \cite{NW95}, they observed that to prove the log-rank conjecture it suffices to show that a $\{0,1\}$-valued matrix $M$ of rank $r$ always contains a monochromatic sub-matrix of size $|M|/\qpoly(r)$ where $\qpoly(r)=r^{\log^{O(1)} r}$ means quasi-polynomial in $r$. Additionally, they used spectral techniques (i.e., arguing about the eigenvectors and eigenvalues of $M$) to show that any $\{0,1\}$-valued matrix $M$ of rank $r$ contains a relatively large submatrix $M'$ ---  of size at least $|M|/r^{3/2}$ --- that is somewhat biased --- its discrepancy is at least $1/r^{3/2}$. We show, using tools from additive combinatorics, that $M'$ in fact contains a pretty large monochromatic submatrix (though not large enough to deduce the log-rank conjecture).

To this end we start by working over the two-element field $\F_2$. This seems a bit counter-intuitive because the log-rank conjecture is false over $\F_2$. The canonical counterexample is the inner product function $IP(x,y)= \inp{x,y}_2$ --- It is well-known (see e.g. \cite{KN97}[Chapters 1.3., 2.5.]) that $\rnk_{\F_2} (M_{IP}) = n$ while $\cc(IP) = n$. However, rather than studying $M$ over $\F_2$ we focus on the biased submatrix $M'$ and things change dramatically. (As a sanity-check notice that $M_{IP}$ does not contain large biased submatrices and this does not contradict the work of \cite{NW95} because the rank of $M_{IP}$ over the reals is $2^{n}-1$.)

Thus, our starting point is a large submatrix $M'$ that has large discrepancy. It is well-known that $\rnk_{\F_2}(M')\leq \rnk(M')\leq r$ and that this implies $M'$ can be written as $M'=A^\top \cdot B$ where $A,B$ are matrices whose columns are vectors in $\F_2^{r}$. Viewing each of $A,B$ as the set of its columns, we have in hand two sets that have a large duality measure as defined in (\ref{eq:duality-measure-introduction}), namely,  $D(A,B)= \delta(M') \geq 1/r^{3/2}$. This is setting in which we apply our Main Technical Lemma~\ref{lem:main-technical} and deduce that $A,B$ contain relatively large subsets $A',B'$ with $D(A',B')=1$. One can now verify that the submatrix of $M'$ whose rows and columns are indexed by $A',B'$ respectively is indeed monochromatic, as needed. We point out that to get our bounds we need to be able to find monochromatic submatrices of $M'$ even when $M'$ is both small and skewed (i.e., has many more columns than rows or vice versa). Fortunately, Lemma~\ref{lem:main-technical} is robust enough to use in such settings.

\paragraph{Improved bounds on approximate duality assuming PFR.}

We briefly sketch the proof of our Main Technical Lemma \ref{lem:main-technical}. We use the {\em spectrum} of a set as defined in \cite[Chapter 4]{TV06}:
\begin{defn}[Spectrum]\label{def:spectrum}
  For a set $B \subseteq \F_2^n$ and $\alpha\in[0,1]$ let the {\em $\alpha$-spectrum} of $B$ be the set
\begin{equation}\label{eq:spec}
\spec_\alpha (B): = \{ x \in \F_2^n \;|\;    \mid  \E_{b\in B}\big[(-1)^{\inp{x,b}_2}\big]\mid \geq \alpha \}.
\end{equation}
\end{defn}

Notice that $A \subseteq \spec_{\epsilon}(B)$ implies $D(A,B) \geq \epsilon$ (cf. (\ref{eq:duality-measure-introduction})). In the other direction, Markov's inequality can be used to deduce that $D(A,B) \geq \epsilon$ implies the existence of $A' \subseteq A$ of relatively large size ---
$|A'| \geq \frac \eps 2 |A|$ --- such that $A' \subseteq \spec_{\epsilon/2}(B)$. To prove our lemma we start with $A_1=A'$ and establish a sequence of sets \[A_2 \subseteq A_1+A_1, \ \  A_3 \subseteq A_2+A_2,\ldots\] such that $A_i \subseteq \spec_{\epsilon_i}(B)$ for all $i$. This holds by construction for $A_1$ with $\epsilon_1 = \epsilon/2$, and we show that it is maintained throughout the sequence for increasingly smaller values of $\epsilon_i$ (we shall use $\epsilon_i=\epsilon_{i-1}^2$).

Moving our problem from the field of real numbers to the two-element field $\F_2$ now pays off. Each $A_i$ is of size at most $2^n$ so there must be an index $i \leq n / \log K$ for which  $|A_{i+1}| \leq K|A_{i}|$, let $t$ be the minimal such index. We use the PFR conjecture together with the Balog--Szemer{\'e}di--Gowers Theorem \ref{thm:BSG} from additive combinatorics to show that our assumption that $|A_{t+1}| \leq K|A_{t}|$ implies that a large subset $A''_t$ of $A_t$ has small span (over $\F_2$).

We now have in hand a set $A''_t$ which is a relatively large fraction of its span and additionally satisfies $D(A''_t,B)  \geq \epsilon_t$ because by construction $A''_t\subseteq \spec_{\epsilon_t}(B)$. We use an approximate duality claim from \cite{BZ11} (Lemma~\ref{lem:fourier}) which applies when one of the sets is a large fraction of its span (in our case the set which is a large fraction of its span is $A''_t$). This claim says that $A''_t$ and $B$ each contain relatively large subsets $A'_t, B'_t$ satisfying $D(A'_t,B'_t)=1$. Finally, recalling $A'_t$ is a (carefully chosen) subset of $A_{t-1}+A_{t-1}$, we argue that $A_{t-1}$ contains a relatively large subset $A'_{t-1}$ that is ``dual'' to a large subset $B'_{t-1}$ of $B'_t$, where by ``dual'' we mean $D(A'_{t-1},B'_{t-1})=1$ (in other words $A'_{t-1}$ is contained in an affine shift of the space dual to $\spn{B'_{t-1}}$). \enote{Added formal explanation of ``dual''.} We continue in this manner to find pairs of ``dual'' subsets for $t-2,t-3,\ldots,1$ at which point we have found a pair of ``dual'' subsets of $A,B$ that have relatively large size, thereby completing the proof.

\paragraph{Paper organization.}  The next section contains the proof of the Main Technical Lemma~\ref{lem:main-technical}. The proof of Main Theorem \ref{thm:main} given the Main Technical Lemma \ref{lem:main-technical}
\ifnum\stoc=1
is deferred to Appendix~\ref{sec:upper-bounds-CC} due to space limitations.
\else
appears in Section \ref{sec:upper-bounds-CC}.
\fi
\enote{Changed order of proofs because want the more important part to be contained in first 10 pages of submission.}

\section{Improved bounds on approximate duality assuming PFR}\label{sec:improved AD}
\enote{Would be nice to give more structure to this section. After all, this is our main technical contribution. }

In this section we prove our Main Technical Lemma \ref{lem:main-technical} by proving the following more general version of it.

\begin{lem}[Main technical lemma, general form]\label{lem:improved ADC}
Suppose that $A,B \subseteq \{0,1\}^n$ satisfy $D(A,B) \geq \epsilon$. Then assuming the PFR Conjecture \ref{conj:PFR}, for every $K \ge 1$ and $t = n/\log K$, there exist subsets $A',B'$ of $A,B$ respectively such that $D(A',B')=1$, and
\begin{equation}\label{eq:222}|A'| \geq \poly \bigg( \frac {(\epsilon/2)^{2^{t}}} {nK} \bigg) (4n)^{-t} |A|,
\quad \quad |B'| \geq \poly \bigg(\frac {(\epsilon/2)^{2^t}} {nK}\bigg) 2^{-t} |B|.\end{equation}
\end{lem}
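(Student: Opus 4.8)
The plan is to carry out exactly the strategy sketched in the proof overview: build an increasing chain of subsets of iterated sumsets that are all correlated with $B$, stop at the first level where the chain almost stops growing, extract additive structure there (via Balog--Szemer\'{e}di--Gowers and PFR), convert it into a perfectly dual pair using the span-based approximate-duality lemma of \cite{BZ11}, and finally descend back down the chain, losing only a polynomial factor at each level.

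\textbf{Building the chain and stopping.} First I would apply Markov's inequality to $D(A,B)\ge\eps$ to obtain $A_1\subseteq A$ with $|A_1|\ge\tfrac\eps2|A|$ and $A_1\subseteq\spec_{\eps/2}(B)$; passing to the larger sign-class I may assume $\E_{b\in B}(-1)^{\inp{a,b}}$ has a fixed sign on $A_1$, and I set $\eps_1=\eps/2$. Given $A_i\subseteq\spec_{\eps_i}(B)$ with fixed sign, a Cauchy--Schwarz/Parseval computation gives $\E_{a,a'\in A_i}\big[\E_{b\in B}(-1)^{\inp{a+a',b}}\big]\ge\eps_i^2$, so a $\gtrsim\eps_i^2$ fraction of pairs $(a,a')\in A_i\times A_i$ satisfy $\E_{b\in B}(-1)^{\inp{a+a',b}}\ge\eps_i^2/2$; I would take $A_{i+1}$ to be a suitably chosen large subset of $(A_i+A_i)\cap\spec_{\eps_{i+1}}(B)$ with $\eps_{i+1}:=\eps_i^2/2$ (still positively signed, as $\eps_{i+1}>0$), selected via a popularity threshold on the number of representations $a+a'$ ($a,a'\in A_i$) so that $A_{i+1}$ — and later the subsets of it we need — are hit by many pairs from $A_i\times A_i$. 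Iterating yields $\eps_i=(\eps/2)^{2^{i-1}}$ up to a bounded power of two. Since each $A_i\subseteq\F_2^n$ has $|A_i|\le2^n$, if $|A_{i+1}|>K|A_i|$ held for all $i\le t=n/\log K$ then $|A_{t+1}|>K^t|A_1|=2^n|A_1|>2^n$, a contradiction; let $\tau\le t$ be minimal with $|A_{\tau+1}|\le K|A_\tau|$.

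\textbf{Additive structure and a dual pair at level $\tau$.} Since a $\gtrsim\eps_\tau^2$-fraction of pairs in $A_\tau\times A_\tau$ sum into $A_{\tau+1}$ and $|A_{\tau+1}|\le K|A_\tau|$, a Cauchy--Schwarz bound on representation counts shows the additive energy of $A_\tau$ is at least $|A_\tau|^3/K'$ with $K'=\poly(K/\eps_\tau)$. The Balog--Szemer\'{e}di--Gowers theorem \ref{thm:BSG} then gives $A''_\tau\subseteq A_\tau$ with $|A''_\tau|\ge\poly(1/K')|A_\tau|$ and $|A''_\tau+A''_\tau|\le\poly(K')|A''_\tau|$, and the PFR Conjecture \ref{conj:PFR} gives $A'''_\tau\subseteq A''_\tau$ with $|A'''_\tau|\ge\poly(1/K')|A''_\tau|$ and $|\spn{A'''_\tau}|\le|A''_\tau|$, so $A'''_\tau$ occupies a $\poly(1/K')$ fraction of the subspace $V:=\spn{A'''_\tau}$. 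As $A'''_\tau\subseteq A_\tau\subseteq\spec_{\eps_\tau}(B)$ we have $D(A'''_\tau,B)\ge\eps_\tau$; feeding $A'''_\tau$ (a large fraction of $V$) and $B$ into the approximate-duality lemma of \cite{BZ11} (Lemma~\ref{lem:fourier}) produces $A'_\tau\subseteq A'''_\tau$ and $B'_\tau\subseteq B$ with $D(A'_\tau,B'_\tau)=1$, $|A'_\tau|\ge\poly(\eps_\tau,1/K')|A'''_\tau|$ and $|B'_\tau|\ge\poly(\eps_\tau,1/K')|B|$; equivalently $A'_\tau$ lies in a single coset of $(\spn{B'_\tau})^\perp$.

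\textbf{Descent.} For $j=\tau,\tau-1,\dots,2$, assume $D(A'_j,B'_j)=1$, so $A'_j\subseteq v+W^\perp$ with $W=\spn{B'_j}$. Since $A'_j\subseteq A_j\subseteq A_{j-1}+A_{j-1}$, every pair $(x,x')\in A_{j-1}\times A_{j-1}$ with $x+x'\in A'_j$ satisfies $x+x'\equiv v\pmod{W^\perp}$, so by Cauchy--Schwarz over the cosets of $W^\perp$ some coset $c=u+W^\perp$ contains at least $(\#\text{such pairs})/|A_{j-1}|$ points of $A_{j-1}$; I set $A'_{j-1}:=A_{j-1}\cap c$, and let $B'_{j-1}$ be the larger fiber of the linear map $b\mapsto\inp{u,b}$ on $B'_j$, so $|B'_{j-1}|\ge|B'_j|/2$ and $\inp{\cdot,\cdot}$ is constant on $A'_{j-1}\times B'_{j-1}$, giving $D(A'_{j-1},B'_{j-1})=1$ with $\spn{B'_{j-1}}\subseteq W$. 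One arranges, via the popularity built into the chain, that $\#\text{such pairs}\ge\poly(\dots)|A_{j-1}|^2$, so that each descent step costs only a factor $\le 4n$ on the $A$-side and $\le 2$ on the $B$-side. Taking $A':=A'_1\subseteq A_1\subseteq A$ and $B':=B'_1\subseteq B$ and multiplying the losses — $\eps/2$ from Markov, $\prod_{i<\tau}\eps_i^2\approx(\eps/2)^{2^t}$ from building the chain, $\poly(\eps_\tau,1/K')$ from the structural steps, and $(4n)^{-1}$ resp.\ $2^{-1}$ per descent step over $\le t$ steps — yields the two bounds in~\eqref{eq:222}.

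\textbf{Main obstacle.} The crux is making the ``building the chain'' and ``descent'' phases compatible: one must define the chain $(A_i)$, and within the descent the successive cosets, so that the relevant subsets stay \emph{simultaneously} inside the appropriate spectrum, inside the relevant sumset, and ``pair-heavy'' (hit by a $\poly$-fraction of the pairs from the previous level), all while the spans $\spn{B'_j}$ remain small enough that each coset pigeonhole loses only $\poly(n)$ rather than a factor of $|A_{j-1}|$. Choosing the popularity thresholds so that $A_{i+1}$ captures a constant fraction of the ``good'' pairs and yet every element of it (and of its later sub-cosets) retains many representations — balancing these competing demands against one another — is where the real work lies; everything else is Markov, Cauchy--Schwarz, Balog--Szemer\'{e}di--Gowers, PFR, and the \cite{BZ11} lemma applied off the shelf.
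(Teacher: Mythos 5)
Your proposal follows the paper's architecture faithfully — the same chain $A_1\subseteq\spec_{\eps/2}(B)$, $A_{i+1}\subseteq(A_i+A_i)\cap\spec_{\eps_{i+1}}(B)$ with $\eps_{i+1}=\eps_i^2/2$, the dyadic popularity threshold on $\rep_{A_i}$, the pigeonhole bound $t\le n/\log K$ giving a level where growth stalls, and the Balog--Szemer\'edi--Gowers $+$ PFR $+$ Lemma~\ref{lem:fourier} combination at that level — so I'd classify it as essentially the same proof, with one genuinely different component.

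That component is the descent (the paper's Proposition~\ref{lem:induc step}). The paper builds the graph $G=(A_{i-1},E)$ with edges $\{a,a'\}$ whenever $a+a'\in A'_i$, extracts a connected component of size $\ge 2|E|/|A_{i-1}|$, and then argues by propagating the constant value of $\inp{\cdot,b}$ along paths, splitting into the two cases $\inp{A'_i,B'_i}_2\in\{0,1\}$ (the second requiring the component to be bipartite, which is forced by consistency). You instead observe that $D(A'_j,B'_j)=1$ means $A'_j\subseteq v+W^\perp$ for $W=\spn{B'_j}$, so every good pair $(x,x')$ lands in a coset pair summing to $[v]$, and a single Cauchy--Schwarz over the coset counts $n_c$ gives $\max_c n_c\ge(\#\text{pairs})/|A_{j-1}|$; taking $A'_{j-1}=A_{j-1}\cap(u+W^\perp)$ for that winning coset and the majority fiber of $b\mapsto\inp{u,b}$ on $B'_j$ finishes the step directly. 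This is a clean alternative: it replaces the connected-component extraction and the bipartiteness side-case by a one-line pigeonhole, gets the same (in fact marginally better, $\eps_i/(2n)$ vs.\ $\eps_i/(4n)$) per-step loss, and it exposes more explicitly the nested chain of annihilators $W^\perp\subseteq\spn{B'_{j-1}}^\perp\subseteq\cdots$. Your worry in the ``main obstacle'' paragraph about $|\spn{B'_j}|$ being too large is actually unfounded: the Cauchy--Schwarz bound $\max_c n_c\ge(\#\text{pairs})/|A_{j-1}|$ is independent of the number of cosets, so the only thing you need — and already have from the popularity thresholds — is that the pair count is $\gtrsim\poly(\cdots)|A_{j-1}|^2$.

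One small point worth noting, which you handle more carefully than the paper does: passing from $A_{i-1}\subseteq\spec_{\eps_{i-1}}(B)$ to the lower bound $\E_b\big(\E_{a\in A_{i-1}}(-1)^{\inp{a,b}}\big)^2\ge\eps_{i-1}^2$ via the final Cauchy--Schwarz $\ge\big(\E_{a,b}(-1)^{\inp{a,b}}\big)^2$ does require the signs of $\E_b(-1)^{\inp{a,b}}$ over $a\in A_{i-1}$ to be aligned — membership in $\spec_{\eps_{i-1}}(B)$ alone does not give $D(A_{i-1},B)\ge\eps_{i-1}$. You correctly propose to pass to the larger sign-class at each level (a factor-$2$ loss absorbed into the $\poly$); your parenthetical ``still positively signed, as $\eps_{i+1}>0$'' is not a valid reason for the signs to stay aligned automatically, but since you do the sign-class split explicitly at level $1$ and the same split at every level costs nothing asymptotically, the argument goes through.
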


\begin{proof}[Proof of Lemma~\ref{lem:main-technical}]
Follows from Lemma~\ref{lem:improved ADC} by setting
$K=2^{4   n/ \log n}$, $t= \frac{\log n}{4}$, $\epsilon =2^{-\sqrt n}$.
\end{proof}

\paragraph{Additive combinatorics preliminaries} In what follows all arithmetic operations are taken over $\F_2$. For the proof of Lemma~\ref{lem:improved ADC} we need two other theorems from additive combinatorics. The first is the well-known Balog--Szemer{\'e}di--Gowers Theorem of \cite{BalogS,Gowers}.

\begin{thm}[Balog--Szemer{\'e}di--Gowers]\label{thm:BSG}
There exist fixed polynomials $f(x,y)$, $g(x,y)$ such that the following holds for every
subset $A$ of an abelian additive group. If $A$ satisfies $\Pr_{a,a' \in A}[a+a'\in S] \ge 1/K$ for $|S| \leq C |A|$, then one can find a subset $A' \subseteq A$ such that $|A'| \geq |A|/f(K,C)$, and $|A'+A'| \leq g(K,C)|A|$.
\end{thm}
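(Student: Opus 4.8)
The plan is to prove Theorem~\ref{thm:BSG} along the classical graph-theoretic route of Balog--Szemer{\'e}di and Gowers: reduce the statement to a counting lemma for dense bipartite graphs, read off from that lemma a small ``mixed'' sumset, and then symmetrize via the Pl\"unnecke--Ruzsa inequalities. Since the theorem only demands \emph{fixed polynomials} $f,g$, it suffices to track that every loss incurred is polynomial in $K$ and $C$.

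\paragraph{From the popular-sum hypothesis to a dense graph.} Put $\alpha:=1/K$ and let $H$ be the bipartite graph on two disjoint copies $A_\ell,A_r$ of $A$ in which $a\in A_\ell$ is adjacent to $b\in A_r$ exactly when $a+b\in S$. The hypothesis $\Pr_{a,a'\in A}[a+a'\in S]\ge 1/K$ says precisely that $H$ has at least $\alpha|A|^2$ edges; recall also $|S|\le C|A|$.

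\paragraph{The BSG counting lemma.} I would prove that there are $A'\subseteq A_\ell$ and $B'\subseteq A_r$ with $|A'|,|B'|\ge\poly(\alpha)\cdot|A|$ such that \emph{every} pair $(a,b)\in A'\times B'$ is joined in $H$ by at least $\poly(\alpha)\cdot|A|^2$ walks of length three. The steps: (i) delete from $A_\ell$ the vertices of degree $<\tfrac\alpha2|A|$, which costs at most half the edges and leaves at least $\tfrac\alpha2|A|$ vertices; (ii) run a dependent-random-choice argument --- sample $\ell=O(\log(1/\alpha))$ vertices of $A_r$ uniformly and pass to their common neighborhood --- to obtain a set of size $\poly(\alpha)|A|$ inside which all but a $\poly(\alpha)$-fraction of pairs have at least $\poly(\alpha)|A|$ common neighbors; (iii) restrict to the vertices that are ``good'' (in this common-neighbor sense) with almost all others, and observe that if the pairs $(a,x)$ and $(x,b)$ are good, then concatenating a common neighbor of $a$ and $x$ with the edge $xb$ produces many length-three walks from $a$ to $b$. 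I expect this to be \emph{the main obstacle}: forcing the count to hold for \emph{every} pair of the two extracted sets --- not merely on average, or for most pairs --- while keeping both sets polynomially large is exactly where the Balog--Szemer{\'e}di--Gowers argument does its real work, and it is the source of the polynomial dependence on $K$. (A fully detailed treatment of this lemma appears in \cite{TV06}.)

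\paragraph{Reading off a small sumset, and symmetrizing.} A length-three walk $a\to y\to x\to b$ in $H$ certifies $a+y,\,x+y,\,x+b\in S$, hence $a+b=(a+y)-(x+y)+(x+b)\in S-S+S$; and for fixed $(a,b)$ the triple $(s_1,s_2,s_3):=(a+y,x+y,x+b)\in S^3$ determines the walk (since $y=s_1-a$ and $x=s_3-b$), so the number of walks from $a$ to $b$ is at most the number of representations of $a+b$ as $s_1-s_2+s_3$ with $s_i\in S$. Thus every $c\in A'+B'$ admits at least $\poly(\alpha)|A|^2$ such representations, and summing this over $c\in A'+B'$ against $\sum_c \#\{(s_1,s_2,s_3)\in S^3: s_1-s_2+s_3=c\}=|S|^3\le C^3|A|^3$ yields $|A'+B'|\le C^3\cdot\poly(K)\cdot|A|$. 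Finally $A',B'\subseteq A$ with $|A'|,|B'|\ge\poly(1/K)\cdot|A|$ and $|A'+B'|\le C^3\cdot\poly(K)\cdot\min(|A'|,|B'|)$, so the Pl\"unnecke--Ruzsa inequalities give $|A'+A'|\le g(K,C)\cdot|A|$ for a fixed polynomial $g$; together with $|A'|\ge|A|/f(K,C)$ where $f(K,C)=\poly(K)$, this is the desired conclusion after renaming $A'$. (In the ambient group $\F_2^n$ relevant to this paper the symmetrization step is immediate, since there $S-S=S+S$ and $A'-A'=A'+A'$.)
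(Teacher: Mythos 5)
This theorem is not proved in the paper at all: it is quoted as a black box from Balog--Szemer\'edi and Gowers, so there is no internal proof to compare against. Judged on its own terms, your outline is the standard graph-theoretic proof and it is essentially correct, including the two points where this particular formulation needs care. First, the hypothesis here is the ``popular sums land in a small set $S$'' form rather than the more common many-additive-quadruples form, and you handle it correctly: $S$ defines the bipartite graph of density $\ge 1/K$, and $|S|\le C|A|$ enters only in the representation count, where the injective map from length-three walks $a\to y\to x\to b$ to triples $(a+y,x+y,x+b)\in S^3$ with alternating-sign sum $a+b$ gives $|A'+B'|\cdot\poly(1/K)|A|^2\le |S|^3\le C^3|A|^3$. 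Second, the conclusion asks for $|A'+A'|$, not $|A'+B'|$, and your symmetrization via Pl\"unnecke--Ruzsa (or, in the $\F_2^n$ setting actually used in the paper, the identity $A'-A'=A'+A'$ plus the Ruzsa triangle inequality) does yield fixed polynomials $f,g$ in $K$ and $C$. The one genuine piece of work is the counting lemma guaranteeing that \emph{every} pair of $A'\times B'$ is joined by $\poly(1/K)|A|^2$ paths of length three with both sets of size $\poly(1/K)|A|$; your description of step (iii) is slightly loose (``good'' pairs live within one side, and $(x,b)$ should simply be an edge), but you correctly identify this as the crux and defer its full proof to \cite{TV06}, which is no less than the paper itself does for the whole theorem. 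So: acceptable as a proposal, with the understanding that the counting lemma is imported rather than reproved.
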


The second is a lemma from \cite{BZ11} which can be seen as an approximate duality statement which applies when one of the sets has small span:
\enote{Changed statement of theorem. Previously had a tiny bug, said $|A'|\geq \frac12 |A|$}
\begin{lem}[Approximate-duality for sets with small span, \cite{BZ11}]\label{lem:fourier}
If $D(A,B) \geq \epsilon$, then there exist subsets $A' \subseteq A, B' \subseteq B$, $|A'| \geq \frac{\epsilon}{4} |A|$,
$|B'| \geq \frac{\epsilon^2}{4} \frac{|A|} {|\spn A|}|B|$,
such that $D(A',B')=1$. If $A\subseteq \spec_\epsilon(B)$ then we have $|A'|\geq |A|/2$ and $|B'|\geq \epsilon^2\frac{|A|} {|\spn A|}|B|$ in the statement above.
\end{lem}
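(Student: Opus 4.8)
\emph{Proof plan.} The idea is to produce both $A'$ and $B'$ from a single ``restriction class'' of $B$, using elementary Fourier analysis over the group $V:=\spn{A}\subseteq\F_2^n$. Consider the function $g:V\to[-1,1]$ given by $g(a):=\E_{b\in B}\big[(-1)^{\inp{a,b}_2}\big]$, so that the hypothesis $A\subseteq\spec_\epsilon(B)$ says precisely that $|g(a)|\ge\epsilon$ for every $a\in A$, while the hypothesis $D(A,B)\ge\epsilon$ says $\big|\E_{a\in A}g(a)\big|\ge\epsilon$. First I would partition $B=\bigsqcup_\phi B_\phi$ according to the linear functional on $V$ obtained by restricting $a\mapsto\inp{a,b}_2$ to $V$; since $V$ has exactly $|\spn{A}|$ linear functionals there are at most $|\spn{A}|$ nonempty classes, indexed by $\phi$ in the dual of $V$. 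Writing $p_\phi:=|B_\phi|/|B|$, grouping $b$ by its class gives the one-line identity $g(a)=\sum_\phi p_\phi(-1)^{\phi(a)}$ for $a\in V$; that is, the nonnegative numbers $p_\phi$ (which sum to $1$) are exactly the Fourier coefficients of $g$ over $V$, so Parseval yields $\E_{a\in V}[g(a)^2]=\sum_\phi p_\phi^2$.

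The key step is to locate a class $B_{\phi^*}$ that is a reasonably large fraction of $B$. Under either hypothesis one gets $\sum_{a\in A}g(a)^2\ge\epsilon^2|A|$: in the spectrum case this is termwise since $|g(a)|\ge\epsilon$, and in the case $D(A,B)\ge\epsilon$ it follows from $\big|\E_{a\in A}g(a)\big|\ge\epsilon$ and Cauchy--Schwarz. Enlarging the sum to all of $V$ and invoking Parseval then gives
\[
\epsilon^2|A|\ \le\ \sum_{a\in V}g(a)^2\ =\ |\spn{A}|\sum_\phi p_\phi^2\ \le\ |\spn{A}|\cdot\max_\phi p_\phi ,
\]
where the last inequality uses $\sum_\phi p_\phi^2\le(\max_\phi p_\phi)\big(\sum_\phi p_\phi\big)=\max_\phi p_\phi$. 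Hence there is some $\phi^*$ with $p_{\phi^*}\ge\epsilon^2|A|/|\spn{A}|$, i.e.\ $|B_{\phi^*}|\ge\frac{\epsilon^2|A|}{|\spn{A}|}|B|$.

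To conclude, set $B':=B_{\phi^*}$ and let $A'$ be the larger of the two sets $\{a\in A:\phi^*(a)=0\}$ and $\{a\in A:\phi^*(a)=1\}$, so $|A'|\ge|A|/2$. For every $a\in A'$ and $b\in B'$ we have $\inp{a,b}_2=\phi^*(a)$, which is a fixed value $j^*\in\{0,1\}$ on $A'$; therefore $\E_{a\in A',b\in B'}\big[(-1)^{\inp{a,b}_2}\big]=(-1)^{j^*}$ and $D(A',B')=1$. This already establishes the bounds stated in the second (spectrum) part of the lemma, and since $\epsilon\le1$ it also implies the weaker bounds $|A'|\ge\frac{\epsilon}{4}|A|$, $|B'|\ge\frac{\epsilon^2}{4}\frac{|A|}{|\spn{A}|}|B|$ of the first part. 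I do not anticipate a real obstacle: the whole argument is one or two lines per step once it is organized around the restriction classes $B_\phi$, and the only thing that truly needs to be noticed is that their densities $p_\phi$ are literally the Fourier coefficients of $g$ over $V$ (and, dually, that a single heavy class yields a monochromatic combinatorial rectangle). The one point worth flagging is that $A\subseteq\spec_\epsilon(B)$ does \emph{not} imply $D(A,B)\ge\epsilon$ — the values $g(a)$ may have mixed signs — which is why the two hypotheses are handled in parallel via the common lower bound $\sum_{a\in A}g(a)^2\ge\epsilon^2|A|$ rather than by reducing one to the other.
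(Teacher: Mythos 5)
Your proof is correct. The paper itself does not prove this lemma --- it is imported from \cite{BZ11} without proof --- so there is no in-paper argument to compare against; your derivation is the natural Fourier-analytic one (your classes $B_\phi$ are precisely the intersections of $B$ with the cosets of $(\spn{A})^\perp$, and the densities $p_\phi$ are the Fourier coefficients of $g$ over $V=\spn{A}$), and every step checks out: the lower bound $\sum_{a\in A}g(a)^2\ge \epsilon^2|A|$ under either hypothesis, the Parseval identity $\E_{a\in V}[g(a)^2]=\sum_\phi p_\phi^2$, the extraction of a heavy class $p_{\phi^*}\ge \epsilon^2|A|/|\spn{A}|$, and the majority split of $A$ along $\phi^*$ giving a monochromatic rectangle. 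In fact you prove slightly more than is stated, since you obtain the stronger ``spectrum-case'' bounds $|A'|\ge |A|/2$ and $|B'|\ge \epsilon^2\frac{|A|}{|\spn{A}|}|B|$ already under the weaker hypothesis $D(A,B)\ge\epsilon$ (the stated $\epsilon/4$ and $\epsilon^2/4$ constants presumably reflect a reduction via Markov to the spectrum case in the original proof, which your argument bypasses); your closing remark that $A\subseteq\spec_\epsilon(B)$ does not imply $D(A,B)\ge\epsilon$ is also accurate and worth having flagged.
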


\iffalse

The notion of the {\em spectrum} of a set which we take from \cite[Chapter 4]{TV06} plays a central role in our proof.

\begin{defn}[Spectrum]\label{def:spectrum}
  For a set $B \subseteq \F_2^n$ and $\alpha\in[0,1]$ let the $\alpha$-spectrum of $B$ be the set
\begin{equation*}
\spec_\alpha (B): = \{ x \in \F_2^n \;|\;    \mid  \E_{b\in B}\big[(-1)^{\inp{x,b}_2}\big]\mid \geq \alpha \}.
\end{equation*}
\end{defn}

\fi
%
%In the remainder of this section we prove Lemma \ref{lem:improved ADC}.
Recall the definition of the spectrum given in (\ref{eq:spec}):
$$
\spec_\alpha (B): = \{ x \in \F_2^n \;|\;    \mid  \E_{b\in B}\big[(-1)^{\inp{x,b}_2}\big]\mid \geq \alpha \}.
$$
Finally, for $S\subset\F_2^n$ and $x\in \F_2^n$ let $\rep_S(x)$ be the number of different representations of $x$
as an element of the form $s+s'$ where $s,s' \in S$. $\rep_S(x)$ can also be written, up to a normalization factor, as $1_{S} * 1_{S}(x)$ where $1_S$ is the indicating function of the set $S$ and $*$ denotes convolution.

\paragraph{Proof overview}
We construct a decreasing sequence of constants \[\eps_1=\eps/2, \eps_2=\eps_1^2/2, \eps_3=\eps_2^2/2,\ldots\] and a sequence of sets \[A_1:=A \cap \spec_{\epsilon_1}(B) ,\quad  A_2 \subseteq (A_1+A_1)\cap\spec_{\eps_2}(B), \quad A_3 \subseteq (A_2+A_2)\cap \spec_{\eps_3}(B), \ldots \]
Since each of the sets in the sequence is of size at most $2^n$ there must be an index $i \leq n / \log K$ for which  \begin{equation}
  \label{eq:pf-overview} |A_{i+1}| \leq K|A_{i}|
\end{equation} and let $t$ be the minimal such index.
The PFR Conjecture~\ref{conj:PFR} together with the Balog--Szemer{\'e}di--Gowers Theorem~\ref{thm:BSG} will be used to deduce from (\ref{eq:pf-overview}) that a large subset $A''_t$ of $A_t$ has  small span. Applying Lemma \ref{lem:fourier} to the sets $A''_t$ and $B$ implies the existence of large subsets $A'_t \subseteq A_t$ and $B'_t \subseteq B$ such that $D(A'_t,B'_t)=1$. Finally we argue inductively for $i=t-1,t-2,\ldots,1$ that there exist large subsets $A'_i \subseteq A_i$ and $B'_i \subseteq B$ such that $D(A'_i,B'_i)=1$. The desired conclusion will follow from the $i=1$ case. To be able to ``pull back'' and construct a pair of large sets $A'_{i-1}, B'_{i-1}$ from the pair $A'_{i}, B'_{i}$ we make sure every element in $A_i$ is the sum of roughly the same number of pairs in $A_{i-1} \times A_{i-1}$.

\paragraph{The sequence of sets}
\enote{Moved definition of $\rep$ to beginning of proof.}
Let $\epsilon_1:=\epsilon/2$, $A_1:=A \cap \spec_{\epsilon_1}(B)$. Assuming $A_{i-1},\eps_{i-1}$ have been defined set $\eps_i=\eps_{i-1}^2/2$ and let $j_i\in\set{0,\ldots, n-1}$ be an integer index which maximizes the size of
\begin{equation}\label{eq:size_A_i}
\left\{(a,a') \in A_{i-1}\; |\; a+a' \in \spec_{\epsilon_i}(B) \textrm{ and } 2^{j_i} \le \rep_{A_{i-1}}(a+a') \le 2^{j_i+1}\right\}.
\end{equation}
and set
\begin{equation}\label{eq:def_A_i}
A_i := \{a+a': a,a' \in A_{i-1}, a+a' \in \spec_{\epsilon_i}(B) \textrm{ and } 2^{j_i} \le \rep_{A_{i-1}}(a+a') \le 2^{j_i+1}\}.
\end{equation}

\begin{claim}\label{clm:111} For $i=1$ we have $|A_1| \ge (\epsilon/2) |A|$. For $i>1$ we have
\begin{equation}\label{eq:prob_sum_in_A_i}
\Pr_{a,a' \in A_{i-1}}[a+a' \in A_i] \ge \eps_i/n
\end{equation}
and additionally
\begin{equation}\label{eq:lower_bound_size_A_i}
|A_i| \ge \frac{\eps_i}{2^{j_i+1} n}|A_{i-1}|^2.
\end{equation}
\end{claim}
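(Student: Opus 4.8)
The plan is to prove the three assertions by essentially elementary counting, using Markov's inequality and averaging. For $i=1$, the bound $|A_1| \ge (\epsilon/2)|A|$ is exactly the consequence of $D(A,B) \ge \epsilon$ recorded in the proof overview: writing $f(a) := \E_{b\in B}[(-1)^{\inp{a,b}_2}]$, we have $|\E_{a\in A} f(a)| \ge \epsilon$, so since $|f(a)| \le 1$ always, a Markov-type argument (applied to $1 - |f(a)|$, or directly to the real/imaginary-free quantity $f(a)$) shows that at least an $(\epsilon/2)$-fraction of $a \in A$ satisfy $|f(a)| \ge \epsilon/2$, i.e.\ lie in $\spec_{\epsilon_1}(B)$. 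Hence $|A_1| = |A \cap \spec_{\epsilon_1}(B)| \ge (\epsilon/2)|A|$.

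For $i > 1$, the key point is to first show that a $(2\eps_i)$-fraction (say) of pairs $(a,a') \in A_{i-1}\times A_{i-1}$ have $a+a' \in \spec_{\eps_i}(B)$, before slicing by the dyadic value of $\rep_{A_{i-1}}$. The reason $a+a'$ tends to land in $\spec_{\eps_i}(B)$ is the quadratic identity
\[
\E_{a,a'\in A_{i-1}}\Big[\,\Big(\E_{b\in B}(-1)^{\inp{a+a',b}_2}\Big)\Big]
= \E_{b\in B}\Big[\,\Big|\E_{a\in A_{i-1}}(-1)^{\inp{a,b}_2}\Big|^2\,\Big] \ge \E_{b\in B}\Big[\,\Big|\E_{a\in A_{i-1}}(-1)^{\inp{a,b}_2}\Big|\Big]^2,
\]
and the last quantity is at least $\eps_{i-1}^2$ because (inductively) $A_{i-1}\subseteq\spec_{\eps_{i-1}}(B)$ — wait, more precisely because $D(A_{i-1},B)\ge\eps_{i-1}$, which holds since $A_{i-1}\subseteq\spec_{\eps_{i-1}}(B)$. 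So $\E_{a,a'}[\rep\text{-weighted spectral correlation}] \ge \eps_{i-1}^2 = 2\eps_i$, and since $\E_{b}[(-1)^{\inp{a+a',b}_2}]$ is bounded in $[-1,1]$, another Markov argument gives that $\Pr_{a,a'\in A_{i-1}}[a+a'\in\spec_{\eps_i}(B)] \ge \eps_i$. (The factor $2$ in $\eps_i = \eps_{i-1}^2/2$ is there precisely to absorb this Markov loss.)

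Now I split $\spec_{\eps_i}(B)$-good pairs by the dyadic scale $2^{j}\le\rep_{A_{i-1}}(a+a')\le 2^{j+1}$ for $j\in\{0,\dots,n-1\}$; since $\rep_{A_{i-1}}$ is always between $1$ and $|A_{i-1}|^2 \le 4^n$, these $n$ dyadic ranges cover all pairs (one should double-check the range of $j$ fits the stated $\{0,\dots,n-1\}$ — this is where $|A_{i-1}|\le 2^n$, hence $\rep_{A_{i-1}}\le 4^n = 2^{2n}$, is used; if the honest bound needs $j$ up to $2n-1$ that is a cosmetic fix, but the set-up of the paper intends the weaker bound $|A_{i-1}|\le 2^n$ so $\rep\le 2^{2n}$ and one takes ranges $[2^j,2^{j+1}]$ with $j\le 2n-1$, and the "$n$" in the statement should really be thought of as this $O(n)$). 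By pigeonhole on the $n$ (or $O(n)$) dyadic ranges, the maximizing index $j_i$ captures at least a $1/n$ fraction of the good pairs, so $\Pr_{a,a'\in A_{i-1}}[a+a'\in A_i]\ge\eps_i/n$, which is \eqref{eq:prob_sum_in_A_i}. Finally, \eqref{eq:lower_bound_size_A_i} follows from \eqref{eq:prob_sum_in_A_i} by counting: the number of good pairs is at least $(\eps_i/n)|A_{i-1}|^2$, and each element of $A_i$ arises from at most $2^{j_i+1}$ such pairs (by the upper dyadic cutoff $\rep_{A_{i-1}}(a+a')\le 2^{j_i+1}$), so $|A_i| \ge \frac{(\eps_i/n)|A_{i-1}|^2}{2^{j_i+1}}$.

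The main obstacle is not any single step — each is a one-line Markov or averaging argument — but rather making sure the constants line up: one must verify the chosen recursion $\eps_i=\eps_{i-1}^2/2$ exactly pays for the single factor-$2$ Markov loss in passing from the quadratic identity to a probability bound, and that the induction hypothesis being carried forward is "$A_{i-1}\subseteq\spec_{\eps_{i-1}}(B)$" (so that $D(A_{i-1},B)\ge\eps_{i-1}$ is available), which is maintained by construction since $A_i\subseteq\spec_{\eps_i}(B)$ by definition. The secondary nuisance is the exact range of the dyadic index $j_i$, which depends on the crude bound $|A_{i-1}|\le 2^n$; I would simply state it as "$j_i$ ranges over $O(n)$ values" and carry the resulting $O(n)$ (written as $n$ by mild abuse, as the paper does) through \eqref{eq:prob_sum_in_A_i} and \eqref{eq:lower_bound_size_A_i}.
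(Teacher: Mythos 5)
Your proposal follows essentially the same route as the paper: Markov's inequality for the base case $i=1$; for $i>1$, the quadratic identity $\E_{a,a'\in A_{i-1}}\,\E_{b\in B}(-1)^{\inp{a+a',b}_2} = \E_{b\in B}\bigl(\E_{a\in A_{i-1}}(-1)^{\inp{a,b}_2}\bigr)^2$ together with Cauchy--Schwarz to get the lower bound $\eps_{i-1}^2 = 2\eps_i$, then Markov again to find an $\eps_i$-fraction of pairs landing in $\spec_{\eps_i}(B)$, then dyadic slicing and pigeonhole to extract the maximizing $j_i$ (giving the $1/n$ factor), and finally the observation that each element of $A_i$ has at most $2^{j_i+1}$ representations to get \eqref{eq:lower_bound_size_A_i}.

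One small correction to a point you flagged: you worried that $\rep_{A_{i-1}}$ could be as large as $|A_{i-1}|^2 \le 4^n$, forcing $j$ to range over $\{0,\ldots,2n-1\}$. In fact, for $x\in\F_2^n$ and $S\subseteq\F_2^n$, $\rep_S(x)=|\{s\in S : x+s\in S\}| = |S\cap(x+S)| \le |S| \le 2^n$, because once one summand $s$ of a representation $x=s+s'$ is fixed, the other is forced ($s'=x+s$). So the paper's range $j\in\{0,\ldots,n-1\}$ is already correct and no cosmetic adjustment is needed. One could also note (as a shared subtlety with the paper) that the step $D(A_{i-1},B)\ge\eps_{i-1}$ should more carefully come from the one-sided version of the spectrum that the Markov argument actually produces, since $A_{i-1}\subseteq\spec_{\eps_{i-1}}(B)$ alone does not forbid sign cancellation; this is harmless but worth being aware of.
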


\begin{proof}
The case of $i=1$ follows directly from Markov's inequality. For larger $i$ we argue that
$$
\Pr_{a,a' \in A_{i-1}}[a+a' \in \spec_{\epsilon_{i}}(B)] \ge \epsilon_{i}.
$$
To see this use Cauchy-Schwarz to get
$$
\E_{a,a' \in A_{i-1}} |\E_{b \in B} (-1)^{\inp{a+a',b}}| =  E_{b\in B}(\E_{a \in A_{i-1}}[ (-1)^{\inp{a,b}}])^2 \ge (\E_{a \in A_{i-1},b \in B}[ (-1)^{\inp{a,b}}])^2 = \eps_{i-1}^2
$$
and apply Markov's inequality to deduce that an $\eps_i$-fraction of $(a,a')\in A_{i-1}\times A_{i-1}$ sum to an element of $\spec_{\eps_i}(B)$. Selecting $j_i$ to maximize (\ref{eq:size_A_i}) yields inequality (\ref{eq:prob_sum_in_A_i}). Since every element $x \in A_i$ can be represented as $x=a+a'$ with $a,a' \in A_{i-1}$ in at most $2^{j_i+1}$ different ways we deduce (\ref{eq:lower_bound_size_A_i}) from (\ref{eq:prob_sum_in_A_i}) and complete the proof.
\end{proof}

\paragraph{The inductive claim}
Let $t$ be the minimal index such that $|A_{t+1}| \leq K |A_t|$ and note that $t \leq n / \log K$ because all sets $A_i$ are contained in $\F_2^n$. We shall prove the following claim by backward induction.

\begin{claim}[Inductive claim]\label{clm:inductive}
For $i=t,t-1,\ldots,1$ there exist subsets \[A'_i \subseteq A_i, \quad B'_i \subseteq B_i\] such that $D(A'_i,B'_i)=1$ and $A'_i,B'_i$ are not too small:
\[|A'_i| \geq \poly \bigg(\frac{\epsilon_{t+1}} {nK}\bigg) (4n)^{-(t-i)} \bigg( \prod_{\ell =i}^{t} \epsilon_{\ell+1} \bigg) |A_i|, \quad \quad
|B'_i| \geq \poly \bigg(\frac{\epsilon_{t+1}} {nK}\bigg) 2^{-(t-i)} |B|\]
\end{claim}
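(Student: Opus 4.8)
The plan is to prove the claim by backward induction on $i$, starting from $i=t$ and descending to $i=1$. The base case $i=t$ is where the PFR machinery is used, and the inductive step $i \to i-1$ is a ``pull-back'' argument that trades the combinatorial structure of $A_i \subseteq A_{i-1}+A_{i-1}$ for a controlled loss in the size of the dual pair.

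\textbf{Base case $i=t$.} By minimality of $t$ and~(\ref{eq:prob_sum_in_A_i}), we have $\Pr_{a,a'\in A_{t}}[a+a'\in A_{t+1}] \ge \epsilon_{t+1}/n \ge 1/K$ (here I would double-check that the definition of $t$ and the choice of $K$ make this inequality go the right way; if not, one phrases BSG with the probability bound $\epsilon_{t+1}/n$ directly and carries the extra factor), and $|A_{t+1}| \le K|A_t|$. So I can apply the Balog--Szemer{\'e}di--Gowers Theorem~\ref{thm:BSG} with $S = A_{t+1}$, $C \approx K$, to extract $A''_t \subseteq A_t$ with $|A''_t| \ge |A_t|/\poly(nK/\epsilon_{t+1})$ and $|A''_t + A''_t| \le \poly(nK/\epsilon_{t+1})\,|A_t| \le \poly(nK/\epsilon_{t+1})\,|A''_t|$. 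Now invoke the PFR Conjecture~\ref{conj:PFR} on $A''_t$: since its doubling constant is $\poly(nK/\epsilon_{t+1})$, there is a further subset of $A''_t$ — call it $A'''_t$ — of size at least $\poly(nK/\epsilon_{t+1})^{-1}|A''_t|$ whose span has size at most $\poly(nK/\epsilon_{t+1})\,|A'''_t|$, i.e.\ $|\spn{A'''_t}|/|A'''_t| \le \poly(nK/\epsilon_{t+1})$. Since $A'''_t \subseteq A_t \subseteq \spec_{\epsilon_t}(B)$ (noting $\epsilon_t \ge \epsilon_{t+1}$, so in particular $D(A'''_t,B)\ge \epsilon_t \ge \epsilon_{t+1}$), apply Lemma~\ref{lem:fourier} to the pair $(A'''_t, B)$. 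This yields $A'_t \subseteq A'''_t \subseteq A_t$ and $B'_t \subseteq B$ with $D(A'_t,B'_t)=1$, $|A'_t|\ge \tfrac{\epsilon_{t+1}}{4}|A'''_t|$ and $|B'_t| \ge \tfrac{\epsilon_{t+1}^2}{4}\cdot \tfrac{|A'''_t|}{|\spn{A'''_t}|}\,|B| \ge \poly(\epsilon_{t+1}/(nK))\,|B|$. Combining the chain of size losses gives $|A'_t| \ge \poly(\epsilon_{t+1}/(nK))\,|A_t|$ and $|B'_t|\ge \poly(\epsilon_{t+1}/(nK))\,|B|$, which matches the claimed bound for $i=t$ (where the product and the $(4n)^{-(t-i)}$, $2^{-(t-i)}$ factors are empty/unity).

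\textbf{Inductive step.} Suppose $A'_i\subseteq A_i$, $B'_i\subseteq B$ with $D(A'_i,B'_i)=1$ and the stated size bounds hold. Recall every $x\in A_i$ is of the form $a+a'$ with $a,a'\in A_{i-1}$ in between $2^{j_i}$ and $2^{j_i+1}$ ways, and $A_i\subseteq\spec_{\epsilon_i}(B)\subseteq\spec_{\epsilon_i}(B'_i)$ — wait, this last inclusion is not automatic, so instead I would use $D(A'_i,B'_i)=1$ directly: $A'_i$ lies in a single affine coset of $(\spn{B'_i})^\perp$. I want to produce $A'_{i-1}\subseteq A_{i-1}$ lying (mostly) in an affine coset of $(\spn{B'_{i-1}})^\perp$ for a large $B'_{i-1}\subseteq B'_i$. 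The key observation is: writing $x = a+a'$ for $x\in A'_i$, the pair $(a,a')$ satisfies $\inp{a,b}_2 + \inp{a',b}_2 = \inp{x,b}_2 = $ const (independent of $b\in B'_i$, after the affine shift). So for each fixed $b\in B'_i$ the function $a\mapsto\inp{a,b}_2$ on the relevant elements of $A_{i-1}$ splits pairs into ``agree'' and ``disagree''. Counting: the number of pairs $(a,a')\in A_{i-1}^2$ with $a+a'\in A'_i$ is at least $|A'_i|\cdot 2^{j_i}$; dividing by $|A_{i-1}|^2$ and using~(\ref{eq:lower_bound_size_A_i}), this is a $\ge \frac{\epsilon_i}{2n}\cdot\frac{|A'_i|}{|A_i|}$ fraction of all pairs. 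A Fourier/averaging argument on this dense ``popular pairs'' set — essentially re-running the Lemma~\ref{lem:fourier} idea one step — produces $A'_{i-1}\subseteq A_{i-1}$ of size $\ge \tfrac{1}{4n}\cdot\tfrac{|A'_i|}{|A_i|}\cdot\epsilon_i\cdot|A_{i-1}|$ and $B'_{i-1}\subseteq B'_i$ of size $\ge \tfrac12|B'_i|$ with $D(A'_{i-1},B'_{i-1})=1$. (The factor $4n$ absorbs the $(4n)^{-1}$ per step in the claim; the $\epsilon_{\ell+1}$ in the product is this $\epsilon_i$ at step $i-1$, reindexed; the $2^{-1}$ per step matches the halving of $B'$.) Substituting the inductive size bounds for $A'_i$, $B'_i$ reproduces the bounds for $i-1$, completing the induction.

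\textbf{Main obstacle.} The delicate point is the inductive ``pull-back'' step: I need a clean lemma saying that if $D(A'_i,B'_i)=1$ and $A'_i$ sits densely inside $A_{i-1}+A_{i-1}$ with bounded representation function, then $A_{i-1}$ contains a large subset dual to a large subset of $B'_i$. This is where the careful choice in~(\ref{eq:def_A_i}) — restricting to a single dyadic scale $2^{j_i}\le\rep_{A_{i-1}}(\cdot)\le 2^{j_i+1}$ of the representation function — is essential, since it lets one pass between ``fraction of $A_i$'' and ``fraction of pairs in $A_{i-1}^2$'' losing only $\poly(n)$ rather than $\poly(|A_{i-1}|)$. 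I would expect the proof of this pull-back lemma to again go through a Cauchy--Schwarz estimate bounding $\E_{b\in B'_i}(\E_{a}(-1)^{\inp{a,b}})^2$ from below, then Markov to find a large $B'_{i-1}$ on which $A_{i-1}$ restricted to popular elements has $D = 1$ behavior; getting all the constants to line up with the $(4n)^{-(t-i)}\prod_{\ell=i}^t\epsilon_{\ell+1}$ and $2^{-(t-i)}$ bookkeeping in the claim is the fiddly part, but conceptually routine once the lemma is stated.
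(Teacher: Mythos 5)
Your base case is essentially the paper's: apply Balog--Szemer\'edi--Gowers, then the PFR Conjecture, then Lemma~\ref{lem:fourier} (approximate duality for sets with small span). The only small deviation is that you invoke the $D(A,B)\ge\epsilon$ form of Lemma~\ref{lem:fourier} rather than the slightly sharper $A\subseteq\spec_\epsilon(B)$ form the paper uses, but both give $\poly(\epsilon_{t+1}/nK)$ losses, so this is fine.

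The inductive step is where there is a genuine gap. You propose a ``Fourier/averaging argument \ldots\ essentially re-running the Lemma~\ref{lem:fourier} idea,'' i.e.\ Cauchy--Schwarz followed by Markov, to produce a pair with $D=1$. But this cannot work as stated: Cauchy--Schwarz gives a lower bound on an average of squared biases, and Markov then gives you elements of \emph{high} bias, not bias exactly $1$. To upgrade from high bias to $D=1$ you would need to re-invoke approximate-duality machinery, which costs a fresh loss at every level of the induction and would blow past the claimed bounds. The paper's actual pull-back is purely combinatorial and avoids Fourier entirely. It builds the graph $G$ on vertex set $A_{i-1}$ with $(a,a')$ an edge iff $a+a'\in A'_i$. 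The dyadic restriction $2^{j_i}\le\rep_{A_{i-1}}(\cdot)\le 2^{j_i+1}$ (which you correctly flagged as essential) together with~\eqref{eq:lower_bound_size_A_i} and the induction hypothesis shows $|E(G)|\ge 2M|A_{i-1}|^2$ for the target ratio $M$, hence $G$ has a connected component $A''_{i-1}$ of size $\ge 2M|A_{i-1}|$. Now fix an arbitrary $a\in A''_{i-1}$, split $B'_i$ into two halves by $\inp{a,\cdot}_2$ and keep the larger half $B'_{i-1}$, so $|B'_{i-1}|\ge\frac12|B'_i|$. The key point --- which you observed but did not exploit --- is that each edge $(a,a')$ of $G$ carries the \emph{deterministic} constraint $\inp{a,b}_2+\inp{a',b}_2=\inp{A'_i,B'_i}_2$ for every $b\in B'_{i-1}$, a single constant on all of $B'_{i-1}$. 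By connectivity of the component, $\inp{\cdot,b}_2$ is determined on all of $A''_{i-1}$ (up to a two-way split in the $\inp{A'_i,B'_i}_2=1$ case, absorbed by taking the larger half), giving $D(A'_{i-1},B'_{i-1})=1$ exactly and the claimed sizes. So your ``popular pairs'' idea and the role of the dyadic scale are on the right track, but the missing idea is to replace the Fourier step by a dense-graph-has-a-large-connected-component argument, which is what actually delivers exact rather than approximate duality.
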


We split the proof of the claim to two parts. The base case (Proposition~\ref{lem:induc base}) is proved using the tools from additive combinatorics listed in the beginning of this section. The inductive step is proved in Proposition~\ref{lem:induc step} using a graph construction. Before proving Claim~\ref{clm:inductive} we show how it implies Lemma~\ref{lem:improved ADC}.

\begin{proof}
  [Proof of Main Technical Lemma~\ref{lem:improved ADC}]
  Set $i=1$ in Claim~\ref{clm:inductive} above.
  Recall that $\epsilon_{i+1} = \epsilon_i^2/2$ for all $i$, so
\[\epsilon_{\ell+1} = \epsilon^{2^{\ell}} / 2^{2^{\ell}-1} \ge  (\epsilon/2)^{2^{\ell}}.\] Thus we have
$\epsilon_{t+1} \geq (\epsilon/2)^{2^t}$ and
$\prod_{\ell=1}^t \epsilon_{\ell+1} \ge (\epsilon/2)^{2^{t+1}}.$ This gives the bounds on $A',B'$ stated in (\ref{eq:222}).
\end{proof}

\begin{proposition}[Base case of Claim~\ref{clm:inductive}  ($i=t$)]\label{lem:induc base}
There exist subsets $A'_t \subseteq A_t$, $B'_t \subseteq B_t$ such that $D(A'_t,B'_t)=1$ and $A'_t,B'_t$ are not too small:
\[|A'_t| \geq \poly \bigg(\frac{\epsilon_{t+1}} {nK}\bigg)  |A_t|, \quad \quad |B'_t| \geq \poly \bigg(\frac{\epsilon_{t+1}} {nK}\bigg)  |B|.\]
\end{proposition}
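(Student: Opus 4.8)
The plan is to convert the two facts we know about $A_t$ --- that $A_t\subseteq\spec_{\epsilon_t}(B)$ by construction, and that $|A_{t+1}|\le K|A_t|$ by the choice of $t$ --- into the assertion that a large subset of $A_t$ spans a small subspace, and then to finish via the spectrum clause of Lemma~\ref{lem:fourier}. Concretely, I would first apply Claim~\ref{clm:111} at index $t+1$ to get $\Pr_{a,a'\in A_t}[a+a'\in A_{t+1}]\ge \epsilon_{t+1}/n$ (inequality~(\ref{eq:prob_sum_in_A_i})). Combined with $|A_{t+1}|\le K|A_t|$, this is precisely the hypothesis of the Balog--Szemer\'edi--Gowers Theorem~\ref{thm:BSG} with $S=A_{t+1}$, with ``$1/K$'' there set to $\epsilon_{t+1}/n$ and ``$C$'' there set to $K$. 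It yields a subset $A''_t\subseteq A_t$ with
\[|A''_t|\ge \frac{|A_t|}{f(n/\epsilon_{t+1},K)},\qquad |A''_t+A''_t|\le g(n/\epsilon_{t+1},K)\,|A_t|\le K''\,|A''_t|,\]
where $K'':=f(n/\epsilon_{t+1},K)\cdot g(n/\epsilon_{t+1},K)=\poly(nK/\epsilon_{t+1})$ bounds the doubling constant of $A''_t$.

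Next I feed $A''_t$ into the PFR Conjecture~\ref{conj:PFR}: since $|A''_t+A''_t|\le K''|A''_t|$ there is a subset $A'''_t\subseteq A''_t$ with $|A'''_t|\ge (K'')^{-r}|A''_t|$ and $|\spn{A'''_t}|\le |A''_t|$, hence
\[\frac{|A'''_t|}{|\spn{A'''_t}|}\ \ge\ \frac{|A'''_t|}{|A''_t|}\ \ge\ (K'')^{-r}\ =\ \poly(\epsilon_{t+1}/(nK)),\]
i.e.\ $A'''_t$ occupies a $\poly(\epsilon_{t+1}/(nK))$-fraction of its span. Crucially $A'''_t\subseteq A''_t\subseteq A_t\subseteq\spec_{\epsilon_t}(B)$, so I may invoke the ``$A\subseteq\spec_\epsilon(B)$'' clause of Lemma~\ref{lem:fourier} with $A=A'''_t$, $\epsilon=\epsilon_t$. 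This produces $A'_t\subseteq A'''_t$ with $|A'_t|\ge|A'''_t|/2$ and $B'_t\subseteq B$ with $|B'_t|\ge \epsilon_t^2\,\frac{|A'''_t|}{|\spn{A'''_t}|}\,|B|\ge \epsilon_t^2(K'')^{-r}|B|$, satisfying $D(A'_t,B'_t)=1$.

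It then remains to read off the size bounds. For $A'_t$: chaining the three estimates gives $|A'_t|\ge \frac{1}{2}(K'')^{-r}|A''_t|\ge \frac{1}{2f(n/\epsilon_{t+1},K)}(K'')^{-r}|A_t|$, and each factor is $\poly(\epsilon_{t+1}/(nK))$, so $|A'_t|\ge\poly(\epsilon_{t+1}/(nK))|A_t|$. For $B'_t$: the stray factor $\epsilon_t^2$ is harmless thanks to the recursion $\epsilon_t^2=2\epsilon_{t+1}$, whence $\epsilon_t^2(K'')^{-r}=2\epsilon_{t+1}\cdot\poly(\epsilon_{t+1}/(nK))\ge\poly(\epsilon_{t+1}/(nK))$ (using $\epsilon_{t+1}/(nK)\le\epsilon_{t+1}\le 1$), giving $|B'_t|\ge\poly(\epsilon_{t+1}/(nK))|B|$, as required. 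The main things requiring care --- the rest being routine bookkeeping --- are: (i) one must apply the spectrum clause of Lemma~\ref{lem:fourier}, not its generic ``$D(A,B)\ge\epsilon$'' clause, since sign cancellations could a priori make $D(A'''_t,B)$ far smaller than $\epsilon_t$ even though $A'''_t\subseteq\spec_{\epsilon_t}(B)$; and (ii) the parameter passed to BSG must be $n/\epsilon_{t+1}$ (rather than, say, $n/\epsilon_t$), so that the resulting doubling constant --- and hence every polynomial factor above --- is controlled by $nK/\epsilon_{t+1}$, exactly as the Proposition demands.
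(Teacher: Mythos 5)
Your proof follows the paper's argument essentially step for step: apply Claim~\ref{clm:111} at index $t+1$ and the Balog--Szemer\'edi--Gowers theorem to extract from $A_t$ a large subset with small doubling, pass that through PFR to get a large subset which is a $\poly(\epsilon_{t+1}/(nK))$-fraction of its span, and finish with the spectrum clause of Lemma~\ref{lem:fourier}; the bookkeeping, including the use of $\epsilon_t^2=2\epsilon_{t+1}$, matches as well. Your observation (i) is sharper than the paper's wording --- the paper writes ``$A''_t\subseteq\spec_{\epsilon_t}(B)$, and in particular $D(A''_t,B)\geq\epsilon_t$,'' which is not a valid implication because of possible sign cancellations, but the bounds it then quotes are precisely the spectrum-clause bounds, so the two proofs coincide in substance.
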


\begin{proof}

\iffalse
By assumption $|A_{t+1}| \leq K |A_t|$ and  $Pr_{a,a' \in  A_t}[a+a'\in A_{t+1}] \geq \epsilon_{t+1}/n$ by construction. The Balog--Szemer{\'e}di--Gowers Theorem and the PFR Conjecture then imply  the existence of a subset $A''_t \subseteq A_t$ such that  \[ |A''_t| \geq \poly \bigg(\frac{\epsilon_{t+1}} {n} \bigg) \poly \bigg( \frac{|A_t|}{|A_{t+1}|}\bigg)|A_t| = \poly \bigg(\frac{\epsilon_{t+1}} {nK}\bigg)|A_t| ,\]
and \[ |\spn {A''_t}| \leq \poly \bigg(\frac {nK} {\epsilon_{t+1}} \bigg)|A_t|.\]
\fi

By assumption $|A_{t+1}| \leq K |A_t|$ and $Pr_{a,a' \in  A_t}[a+a'\in A_{t+1}] \geq \epsilon_{t+1}/n$ by (\ref{eq:lower_bound_size_A_i}). Hence we can apply the
Balog--Szemer{\'e}di--Gowers Theorem (Theorem \ref{thm:BSG}) to the set $A_t$ to obtain
a subset $\tilde A_t \subseteq A_t$ such that
\[ |\tilde A_t| \geq \poly \bigg(\frac{\epsilon_{t+1}} {nK} \bigg) |A_t| ,\] and
 \[ |\tilde A_t + \tilde A_t| \leq \poly \bigg(\frac{nK} {\epsilon_{t+1}} \bigg) |A_t|  = \poly \bigg(\frac{nK} {\epsilon_{t+1}}\bigg) |\tilde A_t|.\]

Now we can apply the PFR Conjecture~\ref{conj:PFR} to the set $\tilde A_t$ which gives a subset $A''_t \subseteq \tilde A_t$ such that
\[ | A''_t| \geq \poly \bigg(\frac{\epsilon_{t+1}} {nK} \bigg) |\tilde A_t| = \poly \bigg(\frac{\epsilon_{t+1}} {nK} \bigg) |A_t| ,\] and
\[ |\spn {A''_t}| \leq |\tilde A_t| = \poly \bigg(\frac {nK} {\epsilon_{t+1}} \bigg)|A''_t| .\]

Recall that $A''_t \subseteq \spec_{\epsilon_{t}}(B)$, and in particular $D(A''_t,B) \geq \epsilon_{t}$.
Applying Lemma \ref{lem:fourier} to the sets $A''_t$ and $B$ we conclude that there exist subsets $A'_t \subseteq A''_t$, $B' \subseteq B$ such that $D(A'_t,B')=1$, and which satisfy $|A'_t| \geq \frac 12 |A''_t|$ and
\[|B'_t| \geq \epsilon_{t}^2 \frac {|A''_t|}{ |\spn {A''_t}|} |B| =  \poly \bigg(\frac{\epsilon_{t+1}} {nK}\bigg)|B|.\]
This completes the proof of the base case.
\end{proof}

%\begin{lem}\label{lem:A_i size} For all $i \geq 1$
%\[|A_{i+1}| \geq \frac{ \epsilon_{i+1}} {2^{j_{(i+1)}+1}n} |A_i|^2\]
%\end{lem}
%
%\begin{proof}
%
%Note that $A_i \subseteq \spec_{\epsilon_i}(B)$ for all $i$.
%Using convexity we get
%\begin{eqnarray*}
% \epsilon_i^{2} & \leq & \bigg(\E_{b\in B}\E_{ a \in A_i}\bigg[ (-1)^ { \langle a , b \rangle_2} \bigg]\bigg)^2 \\
%& \leq & \E_{b\in B}\bigg(\E_{ a \in A_i}\bigg[ (-1)^ { \langle a , b \rangle_2} \bigg]\bigg)^{2} \\
%& = & \E_{b\in B}\E_{ a,a' \in A_i}\bigg[ (-1)^ { \langle a+a' , b \rangle_2} \bigg]
%\end{eqnarray*}
%
%From Markov's inequality we have
%
%\[Pr_{a,a'\in A_i}\bigg[ a+a'\in \spec_{\epsilon_i^2/2} (B)  \bigg]\geq \frac {\epsilon_i^2} {2}=\epsilon_{i+1}\]
%
%\iffalse
%Concluding, we have
%\[|\{(a,a')\in A_i \times A_i | a+a' \in \spec_{\epsilon_i^2/2}(B)\}| \geq (\epsilon_i^2/2)|A_i|^2 = \epsilon_{i+1}|A_i|^2.\]
%\fi
%The pigeonhole principle and our choice of $j_{i+1}$ then imply that
%%\begin{equation}
%\[|A_{i+1}| \geq \frac{ \epsilon_{i+1}} {2^{j_{(i+1)}+1}n} |A_i|^2 \].
%%\end{equation}
%
%\end{proof}

\begin{proposition}[Inductive step of Claim~\ref{clm:inductive}]\label{lem:induc step}
For every $i=t-1,\ldots,1$ there exist subsets $A'_i \subseteq A_i$, $B'_i \subseteq B_i$ such that $D(A'_i,B'_i)=1$ and $A'_i,B'_i$ are not too small:
\[|A'_i| \geq \poly \bigg(\frac{\epsilon_{t+1}} {nK}\bigg) (4n)^{-(t-i)} \bigg( \prod_{\ell =i}^{t} \epsilon_{\ell+1} \bigg) |A_i|, \quad  \quad
|B'_i| \geq \poly \bigg(\frac{\epsilon_{t+1}} {nK}\bigg) 2^{-(t-i)} |B|.\]
\end{proposition}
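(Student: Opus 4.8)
The plan is to prove the claim by backward induction on $i$, assuming that the desired subsets $A'_{i+1} \subseteq A_{i+1}$, $B'_{i+1} \subseteq B$ with $D(A'_{i+1},B'_{i+1})=1$ have already been constructed (the base case $i=t$ being Proposition~\ref{lem:induc base}), and showing how to ``pull back'' from level $i+1$ to level $i$. Recall from the construction in (\ref{eq:def_A_i}) that every element $x \in A_{i+1}$ is of the form $x = a+a'$ with $a,a' \in A_i$, and moreover $2^{j_{i+1}} \le \rep_{A_i}(x) \le 2^{j_{i+1}+1}$, i.e.\ every element of $A_{i+1}$ has roughly the same number of representations as a sum of two elements of $A_i$. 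Also by Claim~\ref{clm:111}, $\Pr_{a,a'\in A_i}[a+a'\in A_{i+1}] \ge \eps_{i+1}/n$, and $|A_{i+1}| \ge \frac{\eps_{i+1}}{2^{j_{i+1}+1}n}|A_i|^2$, equivalently $2^{j_{i+1}+1} \ge \frac{\eps_{i+1}}{n}\cdot\frac{|A_i|^2}{|A_{i+1}|}$.

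The key step is the following graph construction. Fix the set $A'_{i+1} \subseteq A_{i+1}$ supplied by the induction hypothesis. Consider the bipartite-like graph $G$ on vertex set $A_i$ (two copies, or a symmetric graph) where $\{a,a'\}$ is an edge iff $a+a' \in A'_{i+1}$. The number of (ordered) edges is $\sum_{x \in A'_{i+1}} \rep_{A_i}(x) \ge |A'_{i+1}|\cdot 2^{j_{i+1}}$, which by the size bound on $A'_{i+1}$ and the estimate on $2^{j_{i+1}}$ is at least $\poly(\eps_{t+1}/(nK))(4n)^{-(t-i-1)}(\prod_{\ell=i+1}^t\eps_{\ell+1})\cdot\frac{\eps_{i+1}}{n}\cdot|A_i|^2$ (up to constant factors). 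So $G$ has edge density $\ge \rho$ for $\rho = \poly(\eps_{t+1}/(nK))(4n)^{-(t-i-1)}(\prod_{\ell=i+1}^t\eps_{\ell+1})\cdot\frac{\eps_{i+1}}{n}$ inside $A_i\times A_i$. Now I want to find a single vertex $a_0 \in A_i$ whose neighborhood $N = \{a' : a_0+a' \in A'_{i+1}\}$ is large — at least a $\rho$-fraction of $A_i$ — which exists by averaging. For every $a' \in N$ we have $a_0 + a' \in A'_{i+1}$, hence $D(\{a_0+a' : a'\in N\}, B'_{i+1}) = 1$, i.e.\ for each such sum, $\inp{a_0+a', b}$ is constant over $b \in B'_{i+1}$ (it equals some fixed bit depending only on the sum). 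Splitting $N$ according to the value of $\inp{a_0,b}$ is not needed; instead observe $\inp{a',b} = \inp{a_0+a',b} + \inp{a_0,b}$, and the first term is constant over $B'_{i+1}$ while the second is independent of $a'$. Partition $N$ into at most two classes according to whether $\inp{a_0+a',b}$ (the constant value) is $0$ or $1$; on the larger class $N'$ (of size $\ge |N|/2$), $\inp{a',b} = \inp{a_0,b} + c$ is the same affine function of $b$ for all $a' \in N'$, so $\inp{a'+a'',b} = 0$ for all $a',a''\in N'$, $b\in B'_{i+1}$; but we actually want $D(A'_i,B'_i)=1$ with $A'_i \subseteq A_i$, so instead set $A'_i$ to be a further coset-refinement: fix $a_1 \in N'$ and note $A'_i := N'$ itself already lies in an affine shift of $\spn{B'_{i+1}}^\perp$ since $\inp{a',\cdot}$ agrees with $\inp{a_1,\cdot}$ on $B'_{i+1}$ for every $a'\in N'$ (both equal $\inp{a_0,\cdot}+c$). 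Hence $D(N', B'_{i+1}) = 1$. Set $A'_i := N'$, $B'_i := B'_{i+1}$.

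It remains to verify the size bounds. We get $|A'_i| \ge |N|/2 \ge \rho|A_i|/2$; plugging in $\rho$ and comparing with the telescoping products $\prod_{\ell=i}^t\eps_{\ell+1} = \eps_{i+1}\prod_{\ell=i+1}^t\eps_{\ell+1}$ and $(4n)^{-(t-i)} = (4n)^{-1}(4n)^{-(t-i-1)}$, the factor $\frac{\eps_{i+1}}{n}\cdot\frac12$ is absorbed (with room to spare) into the drop from $(4n)^{-(t-i-1)}$ to $(4n)^{-(t-i)}$, giving $|A'_i| \ge \poly(\eps_{t+1}/(nK))(4n)^{-(t-i)}(\prod_{\ell=i}^t\eps_{\ell+1})|A_i|$ as required. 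For $B'_i = B'_{i+1}$ we have $|B'_i| \ge \poly(\eps_{t+1}/(nK))2^{-(t-i-1)}|B| \ge \poly(\eps_{t+1}/(nK))2^{-(t-i)}|B|$, trivially. The main obstacle I anticipate is getting the bookkeeping of the averaging steps exactly right so that the losses (the factor $2$ from the color split, the factor $n$ from the $\rep$-bucketing already folded into $A_{i+1}$'s construction, and the vertex-averaging loss) all fit inside the stated $(4n)^{-(t-i)}$ budget; the conceptual content — that a positive-density ``sum graph'' into a dual pair yields, via a single high-degree vertex plus a two-coloring, a dual pair one level down — is the heart of the argument and is robust.
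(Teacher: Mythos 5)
Your graph-theoretic step is a genuine (and in fact cleaner) variant of the paper's: you extract a single high-degree vertex $a_0$ and work with its neighborhood $N$, whereas the paper extracts a large connected component and walks along it (with a case split on whether $\inp{A'_{i+1},B'_{i+1}}_2$ is $0$ or $1$). Both yield a subset of $A_i$ of comparable size, and the bookkeeping you do on $|N|$ (degree averaging against the $\rep$ bound from \eqref{eq:lower_bound_size_A_i}) is sound and fits the $(4n)^{-(t-i)}$ budget.

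However, there is a genuine logical gap at the final step, where you conclude $D(N',B'_{i+1})=1$ from the fact that $N'$ lies in an affine coset of $\spn{B'_{i+1}}^\perp$. That implication goes the wrong way: $D(A,B)=1$ \emph{implies} $A$ sits in an affine shift of $\spn{B}^\perp$, but not conversely. Concretely, you have shown $\inp{a',b}=\inp{a_0,b}+c$ for all $a'\in N'$ and $b\in B'_{i+1}$ (with $c$ a fixed bit, since $D(A'_{i+1},B'_{i+1})=1$ forces $\inp{a_0+a',b}$ to take a \emph{single} common value over all $a'$ and $b$ — so in fact your partition of $N$ into two classes is vacuous, one class is all of $N$). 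This gives constancy of $\inp{a',b}$ in $a'$, but the residual term $\inp{a_0,b}$ still varies over $b\in B'_{i+1}$, so $D(N',B'_{i+1})$ need not equal $1$. You explicitly declined to split on $\inp{a_0,\cdot}$ (``Splitting $N$ according to the value of $\inp{a_0,b}$ is not needed''), but this split is precisely what is needed — only it must be applied to $B$, not to $N$. The fix is the paper's: partition $B'_{i+1}$ into $\{b:\inp{a_0,b}=0\}$ and $\{b:\inp{a_0,b}=1\}$, set $B'_i$ to be the larger half, and then $\inp{a',b}$ is constant over $N\times B'_i$, giving $D(N,B'_i)=1$. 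This costs a factor $2$ on the $B$ side, which is exactly what the $2^{-(t-i)}$ budget in the statement is there to absorb; with that correction your argument goes through.
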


\begin{proof} Suppose that the claim is true for $i$ argue it holds for index $i-1$. Let $G=(A_{i-1},E)$ be the graph whose vertices are the elements in $A_{i-1}$, and $(a,a')$ is an edge if $a+a' \in A'_i$.
We bound the number of edges in this graph from below.
Recall from (\ref{eq:def_A_i}) that every $a \in A'_i$ (where $A'_i\subseteq A_i$) satisfies $2^{j_i} \leq \rep_{A_{i-1}}(a) \leq 2^{j_i+1}$. Using this we get

$$
\begin{array}{llll}
|E| & \geq & 2^{j_i}\cdot |A'_i|  &\text{($\rep_{A_{i-1}}(x)\geq 2^{j_i}$ for all $x\in A'_i$)}\\
&  \geq & 2^{j_i}\cdot \poly \bigg(\frac{\epsilon_{t+1}} {nK}\bigg) (4n)^{-(t-i)} \bigg( \prod_{\ell =i}^{t} \epsilon_{\ell+1} \bigg) |A_i|  & \text{(induction hypothesis)}\\
&\geq & 2^{j_i} \cdot \poly \bigg(\frac{\epsilon_{t+1}} {nK}\bigg) (4n)^{-(t-i)} \bigg( \prod_{\ell =i}^{t} \epsilon_{\ell+1} \bigg) \frac{\epsilon_i}{2^{j_i+1}n} |A_{i-1}|^2  & \text{(by~\eqref{eq:lower_bound_size_A_i})}\\
&=  & 2 \cdot \poly \bigg(\frac{\epsilon_{t+1}} {nK}\bigg) (4n)^{-(t-(i-1))} \bigg( \prod_{\ell =i-1}^{t} \epsilon_{\ell+1} \bigg) |A_{i-1}|^2.  &
\end{array}
$$

Let $M:=\poly \bigg(\frac{\epsilon_{t+1}} {nK}\bigg) (4n)^{-(t-(i-1))} \bigg( \prod_{\ell =i-1}^{t} \epsilon_{\ell+1} \bigg)$. Since our graph has at least $2M|A_{i-1}|^2$ edges and
$|A_{i-1}|$ vertices, it has a connected component with at least $2M|A_{i-1}|$ vertices and denote by $A''_{i-1}$ the set of vertices in it.

Choose an arbitrary element $a$ in $A''_{i-1}$.
Partition $B'_i$ into two sets $B'_{i,0}$ and $B'_{i,1}$ such that all elements in $B'_{i,0}$ have inner product $0$ with $a$, and all elements in $B'_{i,1}$
have inner product $1$ with $a$. Let $B'_{i-1}$ be the larger of $B'_{i,0}$,$B'_{i,1}$, and note that $|B'_{i-1}| \geq |B'_i|/2$. Recall that our assumption was
that $D(A'_i,B'_i)=1$. Abusing notation, let $\inp{A'_{i},B'_{i}}_2$ denote the value of $\inp{a',b'}_2$ for some $a'\in A'_i, B'_i$ (the choice of $a',b'$ doesn't matter because $D(A'_i,B'_i)=1$).
Next we consider two cases --- the case where $\inp{A'_{i},B'_{i}}_2=0$, and the case where $\inp{A'_{i},B'_{i}}_2=1$.

In the first case we have that for every $a,a' \in A''_{i-1}$ which are neighbors in the graph, $a+a' \in A'_{i}$, and therefore $\inp{a+a',b}_2=0$ for every $b\in B'_{i-1}$.
This implies in turn that $\inp{a,b}_2= \inp{a',b}_2$ for all elements $a,a'\in A''_{i-1}$ which are neighbors in the graph, $b\in B'_{i-1}$. Since $A''_{i-1}$  induces a connected component, and  due to our
choice of $B'_{i-1}$, this implies that $D(A''_{i-1},B'_{i-1})=1$ so we set $A'_{i-1}=A''_{i-1}$.

In the second case we have that $\inp{a+a',b}_2=1$ for every  $a,a' \in A''_{i-1}$ which are neighbors in the graph, $b\in B'_{i-1}$. In particular this implies that $\inp{a,b}_2= \inp{a',b}_2+1$ for every elements $a,a'\in A''_{i-1}$ which are neighbors in the graph, $b\in B'_{i-1}$. This means that $A''_{i-1}$ can be partitioned into two sets $A'_{i-1,0},A'_{i-1,1}$, where the first one contains all elements in $A''_{i-1}$ that have inner product $0$ with all elements in $B'_{i-1}$, while the second set contains all elements in $A'_{i-1}$ that have inner product $1$ with all elements in $B'_{i-1}$. We set $A'_{i-1}$ to be the larger of these two sets and get $D(A'_{i-1},B'_{i-1})=1$ and $|A'_{i-1}| \geq M|A_{i-1}|$.

Concluding, in both cases we obtained subsets $A'_{i-1},B'_{i-1}$ of $A_{i-1},B$ respectively, such that $D(A'_{i-1},B'_{i-1})=1$ and $A'_{i-1}, B'_{i-1}$ are not too small:
\[|A'_{i-1}| \geq \poly \bigg(\frac{\epsilon_{t+1}} {nK}\bigg) (4n)^{-(t-(i-1))} \bigg( \prod_{\ell =i-1}^{t} \epsilon_{\ell+1} \bigg)  |A_{i-1}| ,\]
and
\[|B'_{i-1}| \geq \frac 12 |B'_i| \geq \frac 12 \poly \bigg(\frac{\epsilon_{t+1}} {nK}\bigg) 2^{-(t-i)} |B|= \poly \bigg(\frac{\epsilon_{t+1}} {nK}\bigg) 2^{-(t-(i-1))} |B|.\]
This concludes the proof of the inductive claim.

\end{proof}

\ifnum\stoc=1

\paragraph{Acknowledgements.} We thank Nati Linial and Eyal Kushilevitz for drawing our attention, independently, to the similarities between the notions of discrepancy and approximate duality, which led us to consider the question addressed in this paper.
\newpage

\bibliography{log-rank}
\bibliographystyle{alpha}   % lables like Kn
\newpage
\appendix
\fi

\section{From approximate duality to communication complexity upper bounds}\label{sec:upper-bounds-CC}

\iffalse
We show an application of Corollary \ref{cor:ADC with loss n/logn} to the log-rank conjecture from communication complexity.
For a $\{0,1\}$-valued matrix $M$, let $\cc(M)$ denote the communication complexity of the boolean function associated with the matrix $M$ in
the two-party communication model. For a field $\F$, let $rk_{\F} (M)$ denote the rank of $M$ over $\F$.

\begin{conjecture}[Log-rank conjecture]
For every $\{0,1\}$-valued matrix M, \[\cc(M) = (\log rk_{\R}(M))^{O(1)}.\]
\end{conjecture}

We shall prove that assuming the PFR conjecture, the following weakening of the above conjecture holds:

\begin{thm}
Assuming the PFR conjecture, for every $\{0,1\}$-valued matrix M, \[\cc(M) = O(rk_{\R}(M)/\log rk_{\R}(M)).\]
\end{thm}
\fi

In this section we prove our main theorem, Theorem \ref{thm:main} given the main technical lemma, Lemma \ref{lem:main-technical}.
The proof of the main technical lemma is deferred to Section \ref{sec:improved AD}.

We start by repeating the necessary definitions. For a $\{0,1\}$-valued matrix $M$,
let $\cc(M)$ denote the communication complexity of the boolean function associated with $M$.
Let $\rnk(M)$ and  $\rnk_{\F_2}(M)$ denote the rank of $M$ over the reals and over $\F_2$, respectively. We denote  by $|M|$ the
total number of entries in $M$, and by
$|M_0|$ and $|M_1|$ the number of zero and non-zero entries of $M$, respectively.
 We say that $M$ is {\em monochromatic} if  either $|M|=|M_0|$ or $|M|=|M_1|$. Finally,
we define the {\em discrepancy} $\delta(M)$ of $M$ to be the ratio $\frac {||M_0|-|M_1||} {|M|}$.

Recall the statements of Theorem \ref{thm:main} and Lemma~\ref{lem:main-technical}.

\restatethm{Theorem \ref{thm:main} - Main theorem}{restated}{
Assuming the PFR conjecture (Conjecture \ref{conj:PFR}), for every $\{0,1\}$-valued matrix M, \[\cc(M) = O(\rnk(M)/\log \rnk(M)).\]
}
\restatethm{Lemma \ref{lem:main-technical} - Main technical lemma}{restated}{
Suppose that $A,B \subseteq \F_2^n$ satisfy $D(A,B) \geq 2^{-\sqrt n}$. Then assuming the PFR conjecture, there exist subsets $A',B'$ of $A,B$ respectively
such that $D(A',B')=1$, and  $|A'|\geq 2^{-cn/ \log n}|A|$, $|B'|\geq 2^{-cn/ \log n}|B|$ for some absolute constant $c$.
}

We first prove that the above lemma is equivalent to the following one:

\begin{lem}[Main technical lemma, equivalent matrix form]\label{lem:main-technical matrix form}
Let $M$ be a $\{0,1\}$-valued matrix with no identical rows or columns, of rank at most $r$ \underline{over $\F_2$}, and of discrepancy at least $2^{-\sqrt r}$. Then assuming the PFR conjecture
(Conjecture \ref{conj:PFR}),  there exists a monochromatic submatrix $M'$ of $M$ of size at least $2^{-c r / \log r}|M|$ for some absolute constant $c$.
\end{lem}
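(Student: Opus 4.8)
The plan is to pass back and forth between a $\{0,1\}$-valued matrix of bounded $\F_2$-rank and a factorization of it over $\F_2$: under this correspondence the discrepancy $\delta(M)$ becomes the duality measure $D(A,B)$ of the two factor sets, and monochromatic submatrices of $M$ correspond exactly to pairs of subsets with duality measure $1$. Lemma~\ref{lem:main-technical matrix form} then falls out of Lemma~\ref{lem:main-technical} in one step, and the converse --- which is what makes the two statements genuinely equivalent --- is obtained by running the correspondence backwards, this being somewhat more delicate.

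\textbf{The correspondence and the easy direction.} Let $M$ be a $\{0,1\}$-valued matrix with no identical rows or columns and $\rnk_{\F_2}(M)=r'\le r$. Take a rank factorization $M=PQ$ over $\F_2$ with $P$ of size $(\#\text{rows})\times r'$ and $Q$ of size $r'\times(\#\text{cols})$, each of rank $r'$. Two coinciding rows of $P$ would make two rows of $M$ equal and two coinciding columns of $Q$ would make two columns of $M$ equal, so the rows of $P$ and the columns of $Q$ are pairwise distinct. Embedding $\F_2^{r'}\hookrightarrow\F_2^{r}$ by zero-padding, the rows of $P$ form a set $A\subseteq\F_2^{r}$ with $|A|=\#\text{rows}$, the columns of $Q$ a set $B\subseteq\F_2^{r}$ with $|B|=\#\text{cols}$, so $|M|=|A|\cdot|B|$, $M_{a,b}=\inp{a,b}_2$, the fraction of zero entries of $M$ equals $\Pr_{a\in A,b\in B}[\inp{a,b}_2=0]$, and therefore $\delta(M)=\bigl|\E_{a\in A,b\in B}[(-1)^{\inp{a,b}_2}]\bigr|=D(A,B)$; moreover, for $A'\subseteq A$ and $B'\subseteq B$, the induced submatrix of $M$ is monochromatic iff $\inp{a,b}_2$ is constant on $A'\times B'$, that is, iff $D(A',B')=1$. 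To prove Lemma~\ref{lem:main-technical matrix form}, assume $r$ exceeds an absolute constant (otherwise $|M|\le 4^{r}=O(1)$ and a single entry already suffices). Then $D(A,B)=\delta(M)\ge 2^{-\sqrt r}$ with $A,B\subseteq\F_2^{r}$, so Lemma~\ref{lem:main-technical} applied with ambient dimension $n=r$ yields $A'\subseteq A$, $B'\subseteq B$ with $D(A',B')=1$, $|A'|\ge 2^{-cr/\log r}|A|$ and $|B'|\ge 2^{-cr/\log r}|B|$; the corresponding submatrix is monochromatic of size $\ge 2^{-2cr/\log r}|M|$, which is the conclusion after renaming the constant. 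It matters here that $A,B$ are viewed inside $\F_2^{r}$, not inside $\F_2^{r'}$: the hypothesis only gives $D(A,B)\ge 2^{-\sqrt r}$, which is weaker than $2^{-\sqrt{r'}}$ when $r'<r$, whereas Lemma~\ref{lem:main-technical} needs the duality threshold to match the ambient dimension it is invoked with.

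\textbf{The reverse direction.} Now one starts from $A,B\subseteq\F_2^{n}$ with $D(A,B)\ge 2^{-\sqrt n}$, forms $M_{a,b}=\inp{a,b}_2$, and wants to invoke Lemma~\ref{lem:main-technical matrix form}; the obstruction is that $M$ may have repeated rows --- precisely when two elements of $A$ differ by a vector in $(\spn B)^{\perp}$ --- or repeated columns, so $M$ is not immediately of the required shape. I would remove this by reducing, in two symmetric steps, to the case $\spn A=\spn B=\F_2^{n}$, where $M$ has no repeated rows or columns, $\rnk_{\F_2}(M)\le n$, and $\delta(M)=D(A,B)$, so Lemma~\ref{lem:main-technical matrix form} with $r=n$ applies. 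For the step on the $B$-side, put $V=\spn B$; since $\inp{a,b}_2$ depends on $a$ only through its class in $\F_2^{n}/V^{\perp}$, grouping $A$ by these classes produces weights $w_u$ (the class densities in $A$) and values $f(u)=\E_{b\in B}[(-1)^{\inp{u,b}_2}]$ with $\bigl|\sum_u w_u f(u)\bigr|=D(A,B)\ge 2^{-\sqrt n}$. A dyadic pigeonhole on the class sizes followed by a restriction to the classes on which $f$ has a fixed sign produces an honest set $\widetilde A\subseteq\F_2^{n}/V^{\perp}$ with $D(\widetilde A,B)\ge 2^{-\sqrt n}/O(n)$, whose preimage in $A$ has measure $\ge 2^{-\sqrt n}/O(n)$ and all of whose fibres have comparable size. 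Doing the same on the other side and reinstating the spanning sets inside $\F_2^{n}$ yields $A^{*},B^{*}\subseteq\F_2^{n}$ that both span, with $D(A^{*},B^{*})\ge 2^{-\sqrt n}/\poly(n)$; since this still exceeds $\exp(-n^{1-\eps})$ for, say, $\eps=\tfrac13$, the robustified form of Lemma~\ref{lem:main-technical} noted in the remark following it applies (through Lemma~\ref{lem:main-technical matrix form} with $r=n$) and delivers a monochromatic subrectangle of relative size $2^{-c'n/\log n}$; blowing the chosen classes back up to subsets of $A$ and $B$ --- which preserves every inner product and costs only a constant factor per side because the fibres are near-uniform --- produces the required $A',B'$ with $D(A',B')=1$ and relative sizes $2^{-O(n/\log n)}$. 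The genuinely delicate point, and the main obstacle, is the passage from the \emph{weighted} average $\sum_u w_u f(u)$ to a large \emph{unweighted} average over a subset of classes: a single dyadic scale of class sizes does not suffice because of potential sign cancellation, and it is the additional restriction to a fixed sign of $f$ that rescues the argument. This reverse implication is needed only to justify the word ``equivalent''; the proof of Theorem~\ref{thm:main} uses solely Lemma~\ref{lem:main-technical} $\Rightarrow$ Lemma~\ref{lem:main-technical matrix form}.
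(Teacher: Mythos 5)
Your forward direction --- Lemma~\ref{lem:main-technical} $\Rightarrow$ Lemma~\ref{lem:main-technical matrix form} --- is correct and is essentially the paper's own argument: factor $M$ over $\F_2$ to get $A,B\subseteq\F_2^r$ with $M_{i,j}=\inp{a_i,b_j}_2$, observe that the absence of repeated rows/columns makes $A,B$ honest sets of the right cardinality and that $D(A,B)=\delta(M)\ge 2^{-\sqrt r}$, apply Lemma~\ref{lem:main-technical} with ambient dimension $r$, and read off the monochromatic submatrix from the dual pair $A',B'$. Your remark that the ambient dimension must be taken to be $r$ rather than $\rnk_{\F_2}(M)=r'$ (hence the zero-padding), because the hypothesis $\delta(M)\ge 2^{-\sqrt r}$ is calibrated to $r$ and would be a weaker assumption relative to $r'$, is a correct and slightly more explicit treatment of a point the paper passes over.

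Your converse direction goes beyond what the paper does: the paper proves only the forward implication and says ``the proof of the converse implication is similar.'' You correctly identify the genuine obstacle --- the matrix $M_{a,b}=\inp{a,b}_2$ can have repeated rows or columns, precisely when $A$ fails to be a transversal modulo $(\spn{B})^\perp$ or vice versa --- and your dyadic-pigeonhole-plus-sign-restriction device for converting the weighted coset average $\sum_u w_u f(u)$ into a bona fide duality bound $D(\widetilde A,B)\ge 2^{-\sqrt n}/O(n)$ on an actual set of class representatives is sound on one side. The weak point is ``doing the same on the other side and reinstating the spanning sets'': after thinning $B$ to $\widetilde B$ the relevant quotient changes from $\F_2^n/(\spn{B})^\perp$ to $\F_2^n/(\spn{\widetilde B})^\perp$, so the $A$-side reduction must be redone (or shown to survive the thinning), and the identification of the two quotients with an ambient $\F_2^m$ in which Lemma~\ref{lem:main-technical matrix form} is re-invoked deserves to be spelled out; as written it is a sketch rather than a proof. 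Since the converse is not used anywhere in the derivation of Theorem~\ref{thm:main}, this does not affect the paper's results, but it is the step you would need to nail down to fully earn the word ``equivalent.''
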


\begin{proof}
We prove only the Lemma \ref{lem:main-technical} $\Rightarrow$ Lemma \ref{lem:main-technical matrix form} implication. The proof of the converse implication is similar.
Denote the number of rows and columns of $M$ by $k,\ell$ respectively.  It is well known that the rank of $M$ over a field $\F$ equals $r$ if and only if
 $M$ can be written as the sum of $r$ rank one matrices over the field $\F$.
Since $\rnk_{\F_2}(M) \leq r$ this implies in turn that
there exist subsets $A,B \subseteq \F_2^r$, $A=\{a_1,a_2,\ldots,a_k\}$, $B=\{b_1,b_2,\ldots,b_{\ell}\}$ such that $M_{i,j}= \inp{a_i,b_j}_2$
for all $1 \leq i\leq k, 1 \leq j \leq \ell$. Since $M$ has no identical rows or columns we know that $|A|=k$, $|B|=\ell$.
Note that $D(A,B)=\delta(M) \geq 2^{-\sqrt r}$.

Lemma \ref{lem:main-technical} now implies the existence of subsets $A' \subseteq A$, $B' \subseteq B$, $|A'|\geq 2^{-cr/ \log r}|A|$,
$|B'|\geq 2^{-cr/ \log r}|B|$, such that $D(A',B')=1$. Let $M'$ be the submatrix of $M$ whose rows and columns correspond to the indices in $A'$
and $B'$ respectively.
The fact that $D(A',B')=1$ implies that $M_{i,j}=\inp{a_i,b_j}_2 \equiv \rm{const}ã$ for all $a_i\in A'$, $b_j\in B'$. Therefore $M'$ is a monochromatic submatrix of $M$ of
which satisfies \[|M'|=|A'||B'| \geq 2^{-2cr/ \log r}|A||B| = 2^{-2c r/ \log r }|M|,\] as required.
\end{proof}

In order to prove Theorem \ref{thm:main} we follow the high-level approach of Nisan and Wigderson \cite{NW95} which was explained in the previous section. They showed that in order to prove the log-rank conjecture it suffices to prove that every $\{0,1\}$-valued matrix of low rank has a large monochromatic submatrix.
We start with the following lemma.

\begin{lem}[Existence of large monochromatic submatrix assuming PFR]\label{lem: mono-r/logr}
Assuming the PFR conjecture, every $\{0,1\}$-valued matrix $M$ with no identical rows or columns has a monochromatic submatrix of size at least  $2^{-O(\rnk(M) / \log \rnk(M))} |M|$.
\end{lem}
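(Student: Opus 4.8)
The plan is to boost the discrepancy of $M$ using the spectral lemma of Nisan and Wigderson and then to feed the resulting biased submatrix into the Main Technical Lemma. Write $r=\rnk(M)$. We may assume $r$ is larger than a suitable absolute constant: for small $r$ the claimed bound reads $\Omega(|M|)$, and since $M$ has distinct rows and columns and $\rnk_{\F_2}(M)\le\rnk(M)\le r$, it has at most $2^r$ rows and at most $2^r$ columns, so $|M|\le 4^r=O(1)$ and a single entry is already a monochromatic submatrix of size $\Omega(|M|)$. For large $r$, recall from \cite{NW95} that $M$ contains a submatrix $M'$ with $|M'|\ge r^{-3/2}|M|$ and discrepancy $\delta(M')\ge r^{-3/2}$. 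Since $M'$ is a submatrix of $M$ we have $r':=\rnk_{\F_2}(M')\le\rnk(M')\le r$, so $M'$ has at most $2^{r'}$ distinct rows and at most $2^{r'}$ distinct columns. I would then split into two cases according to the size of $r'$.

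If $r'\le\tfrac{r}{2\log r}$, partition the rows of $M'$ into classes of pairwise-equal rows and its columns into classes of pairwise-equal columns; each of the at most $2^{2r'}$ resulting blocks is constant, so by averaging the largest block is a monochromatic submatrix of size at least $|M'|/2^{2r'}\ge 2^{-r/\log r}|M'|\ge 2^{-r/\log r}r^{-3/2}|M|=2^{-O(r/\log r)}|M|$, and we are done. If instead $r'>\tfrac{r}{2\log r}$, then for $r$ large we have $\sqrt{r'}\ge\tfrac32\log r$, hence $\delta(M')\ge r^{-3/2}=2^{-(3/2)\log r}\ge 2^{-\sqrt{r'}}$. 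Decompose $M'=A^\top B$ with $A,B$ (multi)sets of vectors in $\F_2^{r'}$ --- the columns of a rank-one-over-$\F_2$ decomposition, taken with multiplicities --- so that $D(A,B)=\delta(M')\ge 2^{-\sqrt{r'}}$. Applying the Main Technical Lemma (Lemma~\ref{lem:main-technical}) in dimension $r'$ produces $A'\subseteq A$ and $B'\subseteq B$ with $D(A',B')=1$ and $|A'|\ge 2^{-cr'/\log r'}|A|$, $|B'|\ge 2^{-cr'/\log r'}|B|$; the submatrix of $M'$ indexed by $A'$ and $B'$ is monochromatic since $D(A',B')=1$, and, because $x/\log x$ is increasing and $r'\le r$, it has size at least $2^{-2cr'/\log r'}|M'|\ge 2^{-2cr/\log r}r^{-3/2}|M|=2^{-O(r/\log r)}|M|$. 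This proves the lemma in both cases.

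The step I expect to be the main obstacle is that the Nisan--Wigderson submatrix $M'$ need not have distinct rows and columns even though $M$ does, whereas the matrix form of the Main Technical Lemma (Lemma~\ref{lem:main-technical matrix form}) and its equivalence with Lemma~\ref{lem:main-technical} are phrased for matrices with no repeated rows or columns, i.e.\ for genuine sets $A,B$. I would resolve this by checking that the proof of Lemma~\ref{lem:improved ADC} in Section~\ref{sec:improved AD} is insensitive to this distinction: it manipulates only the uniform distributions on $A$ and $B$, their spectra $\spec_\alpha(B)$, the derived sets $A_i$ (which are honest subsets of $\F_2^{r'}$), and the Balog--Szemer\'edi--Gowers theorem, the PFR conjecture, and Lemma~\ref{lem:fourier}, all of which remain valid when $A$ and $B$ are multisets; equivalently, one may first merge equal rows and columns of $M'$ and apply a weighted version of Lemma~\ref{lem:main-technical matrix form}. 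A second feature the argument crucially relies on is that the bounds in Lemma~\ref{lem:main-technical} and Lemma~\ref{lem:improved ADC} depend only on the ambient dimension --- here $r'$, hence ultimately on $\rnk(M)$ --- and not on $|A|,|B|$, i.e.\ on the number of rows and columns of $M'$; this is exactly the size-independence improvement emphasized after Lemma~\ref{lem:main-technical}, and it is what allows the final bound to be stated purely in terms of $\rnk(M)$.
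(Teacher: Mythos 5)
Your proof is correct and follows the same high-level route as the paper: apply Theorem~\ref{thm:disc-log-rank} to get a large biased submatrix $M'$, then use the approximate-duality machinery to extract a monochromatic sub-submatrix. The paper's own proof is almost identical but shorter: it simply observes $\rnk_{\F_2}(M')\le\rnk(M')\le r$ and $\delta(M')\ge r^{-3/2}\gg 2^{-\sqrt r}$ and then invokes Lemma~\ref{lem:main-technical matrix form} directly on $M'$.

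The technicality you flag is a real one: Lemma~\ref{lem:main-technical matrix form} is stated for matrices with no repeated rows or columns, and although $M$ has this property, the Nisan--Wigderson submatrix $M'$ need not --- so the paper's one-line invocation is, strictly read, a small gap. Your two fixes (the case split on $r'=\rnk_{\F_2}(M')$ and the multiset extension of Lemma~\ref{lem:improved ADC}) are valid but heavier than needed. The case split only arises because you choose to embed in $\F_2^{r'}$, and the multiset extension would require re-checking the BSG step and the connected-component argument in Proposition~\ref{lem:induc step} on a multiset of vertices. The cleanest fix is to decompose $M$ rather than $M'$: since $M$ has no repeated rows or columns and $\rnk_{\F_2}(M)\le r$, write $M=A^{\top}B$ with \emph{genuine} sets $A,B\subseteq\F_2^r$; the rows and columns of $M'$ then index honest subsets $\hat A\subseteq A$, $\hat B\subseteq B$ that are distinct as vectors in $\F_2^r$ even if the corresponding rows of $M'$ coincide. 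One has $D(\hat A,\hat B)=\delta(M')\ge r^{-3/2}\ge 2^{-\sqrt r}$, so Lemma~\ref{lem:main-technical} applies in dimension $n=r$ with no case split and no multisets, giving $A'\subseteq\hat A$, $B'\subseteq\hat B$ with $D(A',B')=1$ and $|A'|,|B'|\ge 2^{-cr/\log r}\cdot|\hat A|,|\hat B|$, hence a monochromatic submatrix of size $2^{-2cr/\log r}|M'|\ge 2^{-O(r/\log r)}|M|$. Your closing observation --- that the size-independence of Lemma~\ref{lem:main-technical} is exactly what lets the bound be expressed purely in terms of $\rnk(M)$ --- is correct and worth stating.
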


In order to prove the above lemma we use Lemma \ref{lem:main-technical matrix form},
together with the following theorem from \cite{NW95}, which says that every $\{0,1\}$-valued matrix $M$ contains a submatrix of high discrepancy:

\begin{thm}[Existence of submatrix with high discrepancy \cite{NW95}]\label{thm:disc-log-rank}
Every $\{0,1\}$-valued matrix $M$ has a submatrix $M'$ of size at least $(\rnk(M))^{-3/2}|M| $ and with  $\delta(M') \geq (\rnk(M))^{-3/2} $.
\end{thm}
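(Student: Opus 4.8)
The plan is to prove Theorem~\ref{thm:disc-log-rank} by a spectral argument applied to the $\pm1$ version of $M$. Write $M$ as an $m\times n$ matrix, let $r=\rnk(M)$, $\rho=r+1$, and set $N:=J-2M$, where $J$ is the all-ones matrix; thus $N$ is a $\pm1$ matrix with $N_{ij}=(-1)^{M_{ij}}$ and $\rnk(N)\le\rnk(J)+\rnk(M)=\rho$. The reason to pass to $N$ is that for any submatrix $M'=M[S,T]$ on rows $S$ and columns $T$, the number of ones minus the number of zeros in $M'$ equals $-\mathbf1_S^{\top}N\mathbf1_T$, and hence $\delta(M')=|\mathbf1_S^{\top}N\mathbf1_T|/(|S|\,|T|)$. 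So it suffices to produce $S,T$ with $|\mathbf1_S^{\top}N\mathbf1_T|$ large compared both to $|S|\,|T|$ and to $|M|=mn$, up to factors polynomial in $\rho$.

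First I would extract the spectral gain: since all entries of $N$ are $\pm1$ we have $\|N\|_F^2=mn$, and $N$ has at most $\rho$ nonzero singular values, so its largest singular value satisfies $\sigma_1^2\ge mn/\rho$. Fix a unit vector $v$ with $\|Nv\|_2=\sigma_1$ (a top right singular vector) and put $y:=Nv$, so $\|y\|_2\ge\sqrt{mn/\rho}$, while each coordinate $y_i=\langle N_i,v\rangle$ obeys $|y_i|\le\|N_i\|_2\|v\|_2=\sqrt n$. The main step --- and the place where the real work lies --- is a two-sided rounding that replaces $y$, and then the column-analogue of $y$, by $0/1$ indicators at a cost of only $\poly(\rho)$. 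Concretely: set $\tau:=\|y\|_2/(2\sqrt m)$; then $\sum_{i:|y_i|<\tau}y_i^2\le\tau\sum_{i:|y_i|<\tau}|y_i|\le\tau\|y\|_1\le\tau\sqrt m\,\|y\|_2=\tfrac12\|y\|_2^2$, so $I_0:=\{i:|y_i|\ge\tau\}$ carries at least half of $\|y\|_2^2$, which together with $\|y\|_\infty\le\sqrt n$ forces $|I_0|\ge\Omega(m/\rho)$ and $\sum_{i\in I_0}|y_i|\ge\Omega(m\sqrt n/\rho)$. Replacing $I_0$ by its larger sign class gives a set $I$ on which all $y_i$ have a common sign, so that $|\mathbf1_I^{\top}Nv|=\sum_{i\in I}|y_i|$ is still $\Omega(m\sqrt n/\rho)$; consequently $z:=N^{\top}\mathbf1_I$ satisfies $\|z\|_2\ge|\langle z,v\rangle|=|\mathbf1_I^{\top}Nv|\ge\Omega(m\sqrt n/\rho)$ while $\|z\|_\infty\le|I|\le m$. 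Running the very same threshold-then-sign-class step on the coordinates of $z$ produces a column set $J$ with $|\mathbf1_I^{\top}N\mathbf1_J|=\sum_{j\in J}|z_j|$ large relative both to $|I|\,|J|$ and to $mn$. Taking $S=I$, $T=J$ and reading off the inequalities then gives a submatrix of size $\Omega(\rho^{-O(1)})\,|M|$ with $\delta\ge\Omega(\rho^{-O(1)})$; a careful choice of the two thresholds, as carried out in \cite{NW95}, sharpens both exponents to $3/2$ and absorbs the ``$+1$'' in $\rho$ into the constant.

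The main obstacle is precisely this rounding --- passing from the real singular vectors to $0/1$ witnesses while keeping the loss polynomial in the rank. Two things to watch while doing it. First, the argument never complements a row or a column of $M$ (which would not preserve submatrix discrepancy): the only ``sign'' operations are the passages to a sign class, which merely restrict to a sub-collection of rows or of columns and are therefore harmless. Second, the degenerate regimes should be handled separately --- if $m$ or $n$ is smaller than a constant times $\rho$, or $\rho$ is bounded by an absolute constant, the asserted submatrix can be exhibited directly --- so one may assume $m$, $n$ and $\rho$ are all large when running the spectral computation.
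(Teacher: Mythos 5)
The paper does not include a proof of Theorem~\ref{thm:disc-log-rank}; the result is cited verbatim from Nisan and Wigderson~\cite{NW95}, so there is no in-paper argument to compare against. Your spectral argument (pass to the $\pm1$ matrix $N$, use $\|N\|_F^2=mn$ together with $\rnk(N)\le\rho$ to get $\sigma_1\ge\sqrt{mn/\rho}$, then round the top singular data to $0/1$ indicators by threshold-then-sign-class) is, as far as I know, essentially the one in~\cite{NW95}, and it is correct. The one soft spot is that you assert the sharp $3/2$ exponents follow from ``a careful choice of the two thresholds'' without verifying it. In fact your thresholds already deliver them, provided you keep \emph{both} lower bounds on $Y:=\sum_{i\in I}|y_i|$ in play: $Y\ge|I|\tau$ (from the threshold, since $I\subseteq I_0$) and $Y\ge\|y\|_2^2/(4\sqrt n)$ (from half the $\ell_2$-mass surviving and $\|y\|_\infty\le\sqrt n$; for this take $I$ to be the sign class of $I_0$ with larger $\ell_1$-mass). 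Running the same step on $z$ with $\tau'=\|z\|_2/(2\sqrt n)$ gives $J\subseteq J_0$ with $|J|\ge\|z\|_2^2/(4|I|^2)\ge Y^2/(4|I|^2)$ and $\sum_{j\in J}|z_j|\ge|J|\tau'$, and hence
\[
\delta(M')\ \ge\ \frac{\tau'}{|I|}\ \ge\ \frac{Y}{2\sqrt n\,|I|}\ \ge\ \frac{\tau}{2\sqrt n}\ \ge\ \frac{1}{4\sqrt\rho},
\qquad
|I|\,|J|\ \ge\ \frac{Y^2}{4|I|}\ \ge\ \frac{Y\tau}{4}\ \ge\ \frac{\|y\|_2^3}{32\sqrt{mn}}\ \ge\ \frac{mn}{32\,\rho^{3/2}}.
\]
So the size exponent is $3/2$ as required, and the discrepancy exponent is actually $1/2$, stronger than what the theorem asks for; the absolute constants and the shift $\rho=r+1$ are absorbed, as you indicate, by treating small $m$, $n$, $r$ separately.
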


\begin{proof}[Proof of Lemma \ref{lem: mono-r/logr}]
Let $r=\rnk(M)$. Theorem \ref{thm:disc-log-rank} implies the existence of a submatrix $M'$ of $M$ with $|M'| \geq (\rnk(M))^{-3/2}|M| $, and
 $\delta(M') \geq r^{-3/2} \gg 2^{-\sqrt r}$.
Note also that \[\rnk_{\F_2}(M') \leq \rnk(M') \leq \rnk(M)= r.\]
%\[|M'| \geq ||M'_0| - |M'_1|| \geq \delta(M') |M'| \geq r^{-3/2}|M|.\]

\iffalse
Denote by $k,\ell$ the number of rows and columns in $M$, respectively.
Since $ rk_{\F_2}(M) \leq  rk_{\mathbb{R}}(M)  = r$ there exist subsets $A=\{a_1,a_2,\ldots,a_k\},B=\{b_1,b_2,\ldots,b_{\ell}\} \subseteq \{0,1\}^r$ such that
$M_{i,j}:= \inp{a_i,b_j}\;(mod\;2)$ for all $1 \leq i \leq k, 1 \leq j \leq \ell$.  Let $A', B'$ be the subsets of $A,B$ which correspond to the
 rows and columns of $M'$, respectively. Then $A', B'$ are subsets of $\{0,1\}^r$ such that
 \[D(A',B') = |M|r^{-3/2}/|M'| \geq r^{-3/2} \gg 2^{-\sqrt{r}}.\]

Lemma \ref{lem:main-technical matrix form} then implies the existence of subsets $A'',B''$ of $A',B'$ respectively such that
$|A''| \geq 2^{-cr/\log r}|A'|$, $|B''| \geq 2^{-cr/\log r}|B'|$ such that $D(A'',B'')=1$.
This in turn implies the existence of a monochromatic submatrix $M''$ of $M'$ of size $|A''||B''|$. So we have:
 \[|M''| = |A''||B''| \geq 2^{-2cr/\log r}|A'||B'| = 2^{-2cr /\log r}|M'|  \geq 2^{-2cr / \log r} |M|2^{-\log^2r} = 2^{-\Omega (r / \log r)} |M| \]
\fi

Lemma \ref{lem:main-technical matrix form} then implies the existence of a monochromatic submatrix $M''$ of $M'$ of size at least
 $2^{-c r / \log r}|M'|$ for some absolute constant $c$.
So we have that $M''$ is a monochromatic submatrix of $M$ which satisfies
\[|M''| \geq 2^{-c r / \log r}|M'|  \geq 2^{-c r / \log r} r^{-3/2}|M| = 2^{-O(r/ \log r)} |M|\]

\end{proof}

\begin{proof}[Proof of Theorem \ref{thm:main}]

Let $M$ be a $\{0,1\}$-valued matrix. We will construct a deterministic protocol for $M$ with communication complexity $O(\rnk(M)/\log \rnk(M))$. We may assume w.l.o.g that $M$ has no repeated rows or columns, otherwise we can eliminate the repeated row or column
and the protocol we construct for the ``compressed'' matrix (with no repeated rows/columns) will also be a protocol for $M$.

We follow the high level approach of the proof of Theorem 2 from \cite{NW95}. We will show a protocol with $ 2^{O(r/ \log r)}$ leaves.
This will suffice
since it is well-known  that a protocol with $t$ leaves has communication complexity at most $O(\log t)$ (cf. \cite[Chapter 2, Lemma 2.8]{KN97}). \enote{Would be nice to give reference to the paper that first proved this.} \nnote{According to KN this lemma is folklore and was never published}

Now we describe the protocol.
Let $Q$ be the largest monochromatic submatrix of $M$. Then $Q$ induces a natural partition of $M$ into 4 submatrices $Q,R,S,T$ with $R$ sharing the rows of $Q$ and $S$ sharing the columns of $Q$.
\[M = \left( \begin{array}{ccc}
Q & R  \\
S & T \end{array} \right)\]

Let $U_1$ be a subset of the rows of $(Q|R)$ whose restriction to the columns of $R$ span the rows of $R$. Similarly, let $U_2$ be a subset of the rows of $(S|T)$ whose restriction to the columns of $S$  span the rows of $S$.
Note that if $Q$ is the all zeros matrix then the rows of $U_1$ are independent of the rows of $U_2$. Otherwise, if $Q$ is the all ones matrix then the rows of $U_1$ are independent of all the rows of $U_2$ except possibly for
the vector in $U_2$ whose restriction to the columns of $S$ is the all ones vector (if such vector exists).
Thus since $Q$ is monochromatic we have that
 $\rnk(R) + \rnk(S)  = |U_1| + |U_2|  \leq \rnk(M) +1 $.

\nnote{Will add a drawing of the matrix later. Need to figure out how to do it.}

If $\rnk(R)\leq \rnk(S)$ then the row player sends a bit saying if his input belongs to the rows of $Q$ or not.
The players continue recursively with a protocol for the submatrix $(Q|R)$ or the submatrix $(S|T)$ according to the bit sent.
If $\rnk(R) \geq \rnk(S)$ the roles of the row and column players are switched.

Suppose without loss of generality that $\rnk(R)\leq \rnk(S)$. Then after sending one bit we continue with either the matrix $(Q|R)$ which is of rank at most $\rnk(M)/2$ or with the matrix $(S|T)$ which --- thanks to Lemma \ref{lem: mono-r/logr}  --- is of size at most $(1-\delta)|M|$ for $\delta \geq 2^{- cr/ \log r}$.
%(without loss of generality we may assume that $M$ has no identical rows or columns, and thus Lemma \ref{lem: mono-r/logr} applies).

Let $L(m,r)$ denote the number of leaves in the protocol starting with a matrix of area at most $m$ and rank at most $r$. Then we get the following recurrence relation:

\[L(m,r) \leq \left\{\begin{array}{ll} L(m,r/2)+L(m(1-\delta),r) & r>1 \\
1 & r=1\end{array}\right.\]

It remains to show that in the above recursion $L(m,r) = 2^{O(r/ \log r)}$. Applying the recurrence iteratively $1/\delta$ times to the right-most summand we get
$$
L(m,r) \le \delta^{-1} L(m,r/2) + L(m(1-\delta)^{1/\delta},r) \le 2^{c r/\log(r)} L(m,r/2) + L(m/2,r).
$$
Set $A(m,r):= 2^{-2cr / \log r}L(m,r)$. Then we have
$A(m,r) \leq A(m,r/2)+A(m/2,r)$ which together with $A(1,r),A(m,1) \leq 1$
imply
$A(m,r) \leq {\log m + \log r \choose \log r}$
since we may apply the recursion iteratively at most $\log r$ times to the left term and $\log m$ times to the right term before we reach $A(1,r)$ or $A(m,1)$.
This in turn implies $A(m,r) \leq {\log m + \log r \choose \log r} \leq r^{O(\log r)}$ due to
the fact that $r \leq  m \leq 2^{2r}$, since we may assume there are no identical rows or columns in the matrix $M$.

Concluding, we have $L(m,r) \leq 2^{2cr/ \log r + O(\log^2 r) }$, which implies in turn $\cc(M) = O(r/ \log r)$ as claimed.
\end{proof}

\ifnum\stoc=0

\paragraph{Acknowledgements.} We thank Nati Linial and Eyal Kushilevitz for drawing our attention, independently, to the similarities between the notions of discrepancy and approximate duality, which led us to consider the question addressed in this paper.

%\bibliography{log-rank}
%\bibliographystyle{alpha}   % lables like Kn
\fi
\end{document}